\newtheorem{prop}{Proposition}
\newtheorem{theorem}{Theorem}
\def\dirac{\text{III}}
\DeclareMathOperator{\sinc}{sinc}
\renewcommand{\@fnsymbol}[1]{\@arabic{#1}}
\author{Victoria Hutterer\thanks{Industrial Mathematics Institute, Johannes Kepler University, Linz, Austria. victoria.hutterer@indmath.uni-linz.ac.at}, Ronny Ramlau\footnote{Industrial Mathematics Institute, Johannes Kepler University, Linz, and Johann Radon Institute for Computational and Applied Mathematics, Linz,  Austria.} \ and Iuliia Shatokhina\footnote{Johann Radon Institute for Computational and Applied Mathematics, Linz,  Austria.}}
\title{Real-time Adaptive Optics with pyramid wavefront sensors: Accurate wavefront reconstruction using iterative methods}
\begin{document}

\maketitle

\begin{abstract}
In this paper, we address the inverse problem of fast, stable, and high-quality wavefront reconstruction from pyramid wavefront sensor data for Adaptive Optics systems on Extremely Large Telescopes. For solving the indicated problem we apply well-known iterative mathematical algorithms, namely conjugate gradient, steepest descent, Landweber, Landweber-Kaczmarz and steepest descent-Kaczmarz iteration based on theoretical studies of the pyramid wavefront sensor. We compare the performance (in terms of correction quality and speed) of these algorithms in end-to-end numerical simulations of a closed adaptive loop. The comparison is performed in the context of a high-order SCAO system for METIS, one of the first-light instruments currently under design for the Extremely Large Telescope. We show that, though being iterative, the analyzed algorithms, when applied in the studied context, can be implemented in a very efficient manner, which reduces the related computational effort significantly. We demonstrate that the suggested analytically developed approaches involving iterative algorithms provide comparable quality to standard matrix-vector-multiplication methods while being computationally cheaper.
\end{abstract}

\section{Introduction} \label{sec_intro}

We consider the problem of fast, stable and highly accurate wavefront correction for large-scale real-time closed loop Adaptive Optics (AO) systems on Extremely Large Telescopes (ELTs). More specifically, we focus on reconstructing the wavefront $\Phi$ from pyramid wavefront sensor data $s=[s_x,s_y]$ with a pre-defined accuracy and within the required time slot to fit the real-time setting. The available measurements $s$ are related to the wavefront $\Phi$ by the non-linear integral operator $\boldsymbol P=[\boldsymbol P_x,\boldsymbol P_y]$ representing the pyramid sensor model. In part I of this paper \cite{HuSha18_1}, we extensively studied the physical and mathematical forward models of the pyramid wavefront sensor and the underlying operators. The major aim of this paper is to apply the theory of part I \cite{HuSha18_1} in order to solve the inverse problem of reconstruction the unknown wavefront from the given sensor data. Several approximations of the full Fourier optics based, non-linear pyramid sensor model derived in \cite{HuSha18_1} allow for a fast numerical implementation of the wavefront sensor (WFS) operators, which suggests to apply iterative algorithms for solving the reconstruction problem. \bigskip

We study and compare the performance (in terms of correction quality and speed) of well-known mathematical algorithms for solving inverse problems, namely conjugate gradient, steepest descent, Landweber, Landweber-Kaczmarz, and steepest descent-Kaczmarz iteration. The comparison is performed within the context of two specific high-order AO systems which are currently under design for the future ELT instruments. One of the systems is the SCAO (Single Conjugate Adaptive Optics) module for the METIS instrument~\cite{METIS_archiv} on a $39$~m telescope equipped with a $74 \times 74$~pyramid wavefront sensor (PWFS) sensing in the near-infrared K-band (at the wavelength $\lambda = 2200$~nm). The SCAO system is supposed to control $\sim 4000$ mirror actuators at frequencies of $500-1000$~Hz. The second system of interest is the eXtreme Adaptive Optics (XAO) module for the EPICS instrument~\cite{KoVe10} having a $200\times 200$~PWFS as a core component. For the XAO system a huge amount of $\sim 30000$ mirror actuators have to be controlled at a frequency of $3$~kHz, which is a challenge for the real-time control as the algorithm has to produce the reconstruction within a time of less than $0.33$~ms. Both considered systems are expected to provide perfect correction quality resulting in an unprecedented image contrast required for their scientific aims. \bigskip

Additionally, we compare the performance of our methods to the so far standard wavefront reconstruction algorithms used in these days. These are based on matrix-vector-multiplication (MVM) and invert the most exact Fourier optics model. However, the main drawback of any interaction-matrix-based method is their computational effort, which is demanding for the planned large-scale real-time AO systems.  The computational complexity required for setting up the command matrix scales as $O(N^3)$ with $N$ being the number of controlled actuators, and the application of this command matrix on the sensor data scales as $O(N^2)$ \cite{Ellerbroek02}. The indicated limitation of MVM methods makes their application on large-scale AO systems, such as the XAO system on the ELT, hardly feasible even on the hardware expected at the time of the telescope launch in around 2024. The steadily growing mirror sizes of future telescope systems imply an immense grow in the computational load of existing algorithms which means that the numerical effort has to be kept in mind when developing new algorithms. We demonstrate in this paper that all the proposed iterative algorithms provide the required high-quality and stable wavefront correction along with the heavily reduced computational complexity compared to standard matrix-vector-multiplication methods. The suggested matrix-free approaches make highly accurate real-time wavefront reconstruction feasible even for the XAO system.\bigskip

Since the pyramid wavefront sensor was first introduced in astronomical AO by Ragazzoni \cite{Raga96}, it has been extensively studied in optical test benches \cite{Esposito_05,Pinna_07,RaDi02,Veri04} and acknowledged to possess several precious characteristics distinguishing it from other types of wavefront sensors. Among those are the increased sensitivity, adjustable linearity range, and pupil sampling, as well as its ability to sense the segmented piston modes \cite{Esposito_05,Pinna_07}, which is gaining a special importance in the era of ELTs inevitably having segmented mirrors. Due to the named advantages, the PWFS is nowadays integrated as baseline on several telescope instruments under design. Among those we can name MICADO~\cite{Clenet_SPIE_2016_micado_pwfs}, HARMONI~\cite{Fusco_2010_spie_harmoni_eelt,Neichel_2016_spie_pwfs_harmoni_eelt}, METIS~\cite{METIS_archiv}, EPICS~\cite{KoVe10}, and ATLAS~\cite{Fusco_2010_spie_atlas_eelt} on the ELT, SCAO and LTAO systems on the Giant Magellan Telescope (GMT) \cite{VanDam_2012_pwfs_truth_LTAO_GMT,Esposito_2012_pwfs_NGS_SCAO_GMT}, as well as the NFIRAOS~\cite{Mieda_spie_2016_pwfs_truth_TMT,Veran_ao4elt4_pwfs_vs_sh} and PFI~\cite{Macintosh_2006_spie_pwfs_xao_tmt} instruments on the Thirty Meter Telescope (TMT). 
Furthermore, the application area of the PWFS is not limited to Adaptive Optics in Astronomy. Apart from astronomical observations, the pyramid sensor is also utilized in AO in ophthalmology~\cite{Chamot06,Daly_2010,Alvarez_Thesis,Iglesias02} and microscopy~\cite{Ig11,Ig13}. Hence, the problem of improving the quality of wavefront reconstruction approaches for this type of sensor by using sophisticated mathematical methods has never been more interesting, challenging, and important. \bigskip

In Section \ref{chap:existing_algorithms} we describe algorithms existing for wavefront reconstruction from pyramid sensor data. We sketch their advantages, drawbacks and limitations. We proceed with recalling the theoretical principles of wavefront sensing using the pyramid sensor in Section \ref{chap:theory_recall} and mention details on the discretization of the sensor in Section \ref{chap:discretization}. Afterwards, we describe several iterative algorithms, namely the conjugate gradient method for the normal equation and the steepest descent method, Landweber iteration as well as Kaczmarz type algorithms in Section \ref{chap:iter_algorithms}. This Section contains details on the numerical implementation of the involved operators as well. Section \ref{chap:numerics} presents the performance of the proposed algorithms and a comparison with respect to the achieved reconstruction quality. Finally, in Section \ref{chap:computational_complexity} we evaluate and compare the computational complexities of the analyzed approaches. Both, the reconstruction quality and the speed of the algorithms are additionally compared versus those of an MVM approach.

\section{Existing algorithms}\label{chap:existing_algorithms}
For the PWFS, several types of reconstruction approaches have been considered so far \cite{Hu18_thesis,ShatHut_spie2018_overview}. Frequently used are interaction-matrix-based methods which were already mentioned in Section \ref{sec_intro}.

A more computationally efficient algorithm, the Fourier Transform Reconstructor (FTR), with a complexity of $\mathcal{O}(N \log
N)$, was suggested in \cite{QuPa10}. The method was developed for a simplified geometrical optics based model of the roof
WFS, which assumes a large amount of modulation applied to the sensor. Under these assumptions the sensor data are modeled as the derivative of the phase, similar to the Shack-Hartmann (SH) sensor. In the reported algorithms, the authors applied SH Fourier domain filters for wavefront reconstruction. A slightly worse performance of the SH-based FTR compared to MVM was demonstrated for an $8$~m telescope. However, since the geometrical model is valid only for large modulations, this method requires further research on a modal optimization, which becomes especially important in the case of ELTs. 

The idea of applying inverse Fourier domain filters for wavefront reconstruction from pyramid sensor was further developed in \cite{Shat17_ao4elt5_clif,Shat17}, where the authors reported on two methods --- Convolution with the Linearized Inverse Filter (CLIF) with the complexity $\mathcal{O}(N^{3/2})$ and the Pyramid Fourier Transform Reconstructor (PFTR) with $\mathcal{O}(N\log N)$ complexity. Both algorithms use a precise Fourier optics forward model allowing the derivation of more exact Fourier filters connecting the incoming phase to the sensor measurements. 

The fastest available reconstruction algorithm is the Preprocessed Cumulative Reonstructor with Domain decomposition (P-CuReD) \cite{Shat_SPIE,Shat13}. Based on an analytical relation in the Fourier domain, pyramid data are transformed into SH-like data and subsequently inverted using CuReD \cite{Ros11,Ros12} which was originally developed for SH sensors and has already been tested on-sky \cite{Bitenc_cured_onsky,bitenc_ao4elt3_onsky}. This algorithm provides in numerical simulations the same or even better quality results than the standard interaction-matrix-based approaches and scales with a complexity of $\mathcal{O}(N)$.

Further algorithms are based on the fact that the wavefront reconstruction from the full pyramid sensor model can be simplified to an inversion of the finite Hilbert transform. The FHTR (Finite Hilbert Transform Reconstructor) algorithm proposed in \cite{Shatokhina_PhDThesis} uses the direct inversion formula of the finite Hilbert transform and a second method called SVTR (Singular Value Type Reconstructor) \cite{Hut17} is based on an analytical singular value expansion of the finite Hilbert transform operator. Both algorithms have the complexity $\mathcal{O}(N^{3/2})$.

\section{Pyramid/Roof wavefront sensors} \label{chap:theory_recall}
In the following, we recall the theory discussed in part I of the paper \cite{HuSha18_1}. Since all algorithms are based on an approximation of the full pyramid sensor, or more precisely on a linearization of the roof sensor, we will only focus on the properties of the latter and mention corresponding adjoint operators which are needed for the application of the proposed iterative methods. For a precise analysis of the full pyramid sensor model, details about the linearization procedure and proofs we refer the reader to part I of the paper \cite{HuSha18_1}.  

The analytical models of the non- and modulated pyramid wavefront sensors are complex and therefore difficult to invert directly. However, some assumptions suggested by the physical setting of the model itself, allow to simplify the non-linear Fourier optics based model of the pyramid sensor $\boldsymbol P = [\boldsymbol P_x,\boldsymbol P_y]$ (cf \cite{HuSha18_1} for the definition) by substituting the pyramidal prism by two orthogonally placed two-sided roof prisms \cite{BuDa06, Phill06,Veri04}. Due to the physical decoupling of the roof prisms and their orthogonality, the two signal sets $s=[s_x,s_y]$ provided by the pyramid sensor as
\begin{equation}\label{meas1}
\begin{split}
s_x &= -\tfrac{1}{2}\boldsymbol P_x\Phi \\
s_y &=\ \ \tfrac{1}{2}\boldsymbol P_y\Phi
\end{split}
 \end{equation}
 become independent and contain information about the incoming phase $\Phi$ only in $x$- and $y$-direction respectively. Because of symmetry, we only consider the roof sensor operator $\boldsymbol R=\left[\boldsymbol R_x,\boldsymbol R_y\right]$ in $x$-direction, i.e., $\boldsymbol R_x$. By interchanging $x$ and $y$ all assertions are derived for $\boldsymbol R_y$ accordingly.    
 
We describe the annular telescope aperture mask by $\Omega=\Omega_y\times \Omega_x \subseteq \left[-D/2,D/2\right]^2$. Single lines (intervals) of the annular aperture are represented by $\Omega_x = \left[a_x,b_x\right]$ and $\Omega_y = \left[a_y,b_y\right]$, with $a_x <b_x$, $a_y<b_y$ being the borders of the pupil for fixed $x$  and $y$ correspondingly.

\subsection{Linearized roof WFS}
\begin{theorem}
Under the roof sensor assumption, the PWFS signal corresponding to no, circular and linear modulation of amplitude $\alpha = \tfrac{b\lambda}{D}$ with a positive integer $b$ is approximated by
\begin{equation*}\label{roof_meas}
s_x^{\{n,c,l\},lin} =-\tfrac{1}{2} \left(\boldsymbol R_x^{\{n,c,l\},lin}\Phi\right)(x,y)
\end{equation*}
with 
\begin{equation}\label{roof_operator}
\begin{split}
\left(\boldsymbol R_x^{\{n,c,l\},lin}\Phi\right)(x,y)&:=\mathcal{X}_{\Omega}(x,y)\dfrac{1}{\pi} \int\limits_{\Omega_y}{\dfrac{ [\Phi(x',y)-\Phi(x,y)] \cdot k^{\{n,c,l\}} (x'-x) }{x'-x}\ dx'}, \\
\left(\boldsymbol R_y^{\{n,c,l\},lin}\Phi\right)(x,y) &:= \mathcal{X}_{\Omega}(x,y)\dfrac{1}{\pi} \int\limits_{\Omega_x}{\dfrac{{\left[\Phi(x,y')-\Phi(x,y)\right]}k^{\{n,c,l\}}(y'-y)}{y'-y}\ dy'}
\end{split}
\end{equation}
indicating the linearized roof sensor operators and
\begin{align*}
k^n (x) &:= 1   \qquad \qquad \qquad \qquad \qquad \text{(no modulation)}, \\
k^c (x)&: = J_0 (\alpha_{\lambda} x) \qquad \qquad \qquad \ \text{(circular modulation)}, \\
k^l(x)&:=\sinc(\alpha_{\lambda} x) \qquad \qquad \quad \ \ \text{(linear modulation)}. 
\end{align*}
The modulation parameter is given by $\alpha_\lambda = \tfrac{2\pi\alpha}{\lambda}$ and $J_0$ denotes the zero-order Bessel function of the first kind.
\end{theorem}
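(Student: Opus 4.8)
\emph{Proof sketch.} The plan is to \emph{import} the nonlinear roof-sensor model established in part~I \cite{HuSha18_1} and to reduce the statement to a single linearization step, rather than re-deriving everything from Fourier optics. First I would recall from part~I that, after the pyramidal prism has been replaced by two orthogonally placed two-sided roof prisms and the resulting intensity measurement has been normalized, the $x$-component of the sensor signal equals $\mathcal{X}_\Omega(x,y)$ times a modulated Cauchy principal-value integral over the horizontal aperture slice $\Omega_y$ of $\sin\!\big(\Phi(x',y)-\Phi(x,y)\big)$ against the kernel $\tfrac{1}{\pi}\,k^{\{n,c,l\}}(x'-x)/(x'-x)$, the factor $\mathcal{X}_\Omega(x,y)$ enforcing support on the pupil. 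The three profiles $k^n\equiv 1$, $k^c = J_0(\alpha_\lambda\,\cdot)$, $k^l = \sinc(\alpha_\lambda\,\cdot)$ record, respectively, the absence of modulation and the averaging of the focal-plane transmission mask over a circular resp.\ a linear modulation path of amplitude $\alpha = b\lambda/D$; I would obtain them exactly as in part~I, by averaging the tilt phase over the modulation trajectory and using $\tfrac{1}{2\pi}\int_0^{2\pi} e^{\,iz\cos\theta}\,d\theta = J_0(z)$ in the circular case and the analogous one-dimensional average $\tfrac{1}{2}\int_{-1}^{1} e^{\,izt}\,dt = \sinc(z)$ in the linear case, the integrality of $b$ entering only through that part~I computation.

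The second and final step is the linearization. In the closed-loop real-time setting the residual wavefront $\Phi$ is small, so $\sin\!\big(\Phi(x',y)-\Phi(x,y)\big) = \big(\Phi(x',y)-\Phi(x,y)\big) + O\!\big(\norma{\Phi}^3\big)$ uniformly on $\Omega$; substituting the first-order term into the nonlinear model and discarding the cubic remainder produces exactly the operator $\boldsymbol R_x^{\{n,c,l\},lin}$ of \eqref{roof_operator}, while the substitution $x \leftrightarrow y$ gives $\boldsymbol R_y^{\{n,c,l\},lin}$ by symmetry, as anticipated in the text. The prefactor $-\tfrac12$ and the sign in front of $s_x^{\{n,c,l\},lin}$ are inherited verbatim from the roof normalization \eqref{meas1}, so nothing further needs to be checked there; I would also remark that the resulting operator is now genuinely linear, which is precisely what makes the iterative schemes of Section~\ref{chap:iter_algorithms} applicable. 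One should still confirm that the principal-value integral defining $\boldsymbol R_x^{\{n,c,l\},lin}$ is well posed: the difference $\Phi(x',y)-\Phi(x,y)$ vanishes to first order as $x'\to x$ and thereby cancels the Cauchy singularity of $1/(x'-x)$, the smooth bounded factors $k^{\{n,c,l\}}$ do not reintroduce it, and hence for $\Phi$ of the regularity assumed in part~I the integral converges absolutely; in particular the subtracted term $\Phi(x,y)$ is a structural feature of the $\sin(\Phi(x')-\Phi(x))$ model, not an ad hoc regularization.

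I expect the real work here to be bookkeeping rather than hard analysis: making the nonlinear roof model and the three modulation kernels of \cite{HuSha18_1} explicit in the present normalization, and verifying that the linearization error is uniform on the annular aperture $\Omega$ so that the approximation claimed in the statement is legitimate. The one genuinely delicate point — identifying the Fourier symbol of a roof prism with the finite Hilbert transform kernel $1/(\pi(x'-x))$, which is the origin of the $1/(x'-x)$ factor — has already been settled in part~I and only needs to be quoted here.
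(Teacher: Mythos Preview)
Your proposal is correct and matches the intended derivation: the paper's own proof of this theorem is simply a citation (``See \cite{BuDa06,HuSha18_1,Veri04}''), deferring entirely to part~I and the earlier literature, and what you have sketched---quoting the nonlinear roof model with its $\sin(\Phi(x',y)-\Phi(x,y))$ integrand and modulation kernels from part~I, then linearizing via $\sin u\approx u$---is precisely the argument carried out in those references. Your additional remarks on the cancellation of the Cauchy singularity by the difference $\Phi(x',y)-\Phi(x,y)$ and on the origin of the three kernels $k^{\{n,c,l\}}$ are accurate and go beyond what this paper itself records.
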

\begin{proof}
See \cite{BuDa06,HuSha18_1,Veri04}.
\end{proof}
The linearized roof sensor operators $\boldsymbol R^{\{n,c,l\},lin}$ offer a further possibility for simplification of the model due to the splitting
\begin{equation}\label{eq:3c.133}
\left( \boldsymbol R^{\{n,c,l\},lin}_x\Phi \right) (x,y) = \mathcal{X}_{\Omega}(x,y)\left[\left( \boldsymbol L_x^{\{n,c,l\}} \Phi \right) (x,y) - \Phi (x,y) \cdot \left(\boldsymbol L_x^{\{n,c,l\}} 1\right) (x,y)\right]
\end{equation}
for the integral operators $\boldsymbol L_x^{\{n,c,l\}}: \mathcal{H}^{11/6}\left(\R^2\right)\rightarrow\mathcal{L}_2\left(\R^2\right)$ defined by
\begin{equation}\label{3c.13}
( \boldsymbol L_x^{\{n,c,l\}} \Phi ) (x,y) := \dfrac { 1 } { \pi } \ p.v. \int \limits_{\Omega_y} \dfrac{  \Phi (x',y) k^{\{n,c,l\}} \left(x'-x\right)} {x'-x} \ dx',
\end{equation}
where $p.v.$ denotes the Cauchy principal value. Note that $\boldsymbol L_x^n$ is the finite Hilbert transform operator. Dropping the second term in \eqref{eq:3c.133} leads to the inverse problem 
\begin{equation*} \label{meas_Lx}
s_x =-\tfrac{1}{2} \boldsymbol L_x^{\{n,c,l\}}\Phi
\end{equation*}
 for pyramid sensor data $s_x$.

\subsection{Adjoint operators}
All iterative methods for wavefront reconstruction from pyramid sensor data proposed below require the application of adjoint operators. As mentioned, e.g., in \cite{HuSha18_1}, it holds that $\Phi \in \mathcal{H}^{11/6}$, i.e., the underlying operators are defined as $\boldsymbol L^{\{n,c,l\},lin}_x: \mathcal{H}^{11/6} \rightarrow \mathcal{L}_2$ and  $\boldsymbol R^{\{n,c,l\},lin}_x: \mathcal{H}^{11/6} \rightarrow \mathcal{L}_2$. Together with the embedding operator $i_s: \mathcal{H}^{11/6} \rightarrow \mathcal{L}_2$ we derive the corresponding adjoint operators by $$\left( \boldsymbol L^{\{n,c,l\}}_x\right)^* = i_s^*\left( \tilde{\boldsymbol L}^{\{n,c,l\}}_x\right)^*$$ with $\left( \tilde{\boldsymbol L}^{\{n,c,l\}}_x\right)^*:\mathcal{L}_2\rightarrow \mathcal{L}_2$ according to \cite{RaTesch04}. Hence, it sufficies to calculate $\left( \tilde{\boldsymbol L}^{\{n,c,l\}}_x\right)^*$. For simplicity, we use the notation $\left(\tilde{\boldsymbol L}^{\{n,c,l\}}_x\right)^*$ instead of $\left( \boldsymbol L^{\{n,c,l\}}_x\right)^*$ in the following. The roof sensor operators are considered accordingly.

\begin{prop}
The adjoint operators of the roof sensor and its one-term approximation in $\mathcal{L}_2\left(\R^2\right)$ are given by
\begin{align*}
\left( \left( \boldsymbol L^{\{n,c,l\}}_x\right)^*\Psi\right)(x,y) &= -\dfrac{1}{\pi}\ p.v.\int\limits_{\Omega_y}{\dfrac{\Psi(x',y) \cdot k^{\{n,c,l\}} (x'-x) }{x'-x}\ dx'}, \\
\left( \left( \boldsymbol R^{\{n,c,l\},lin}_x\right)^*\Psi\right)(x,y) &= -\dfrac{1}{\pi}\ p.v.\int\limits_{\Omega_y}{\dfrac{\left[\Psi(x',y)+\Psi(x,y)\right] \cdot k^{\{n,c,l\}} (x'-x) }{x'-x}\ dx'}.
\end{align*}
\end{prop}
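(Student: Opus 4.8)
The plan is to read both formulas off the defining relation of the $\mathcal{L}_2$-adjoint, namely $\langle \boldsymbol L_x^{\{n,c,l\}}\Phi,\Psi\rangle=\langle \Phi,(\boldsymbol L_x^{\{n,c,l\}})^*\Psi\rangle$ and the analogous relation for $\boldsymbol R_x^{\{n,c,l\},lin}$. I would work throughout on $\mathcal{L}_2(\Omega)$ --- the natural setting, since the wavefront and the sensor data both live on the aperture --- so that the cut-off $\mathcal{X}_{\Omega}$ acts as the identity and, for each fixed $y$, the variables $x$ and $x'$ range over the \emph{same} interval $\Omega_y$; as the kernels $k^{\{n,c,l\}}$ are real, conjugation of the kernel terms is immaterial.

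\medskip
For the one-term approximation $\boldsymbol L_x^{\{n,c,l\}}$ I would insert definition \eqref{3c.13} into the inner product,
\[
\langle \boldsymbol L_x^{\{n,c,l\}}\Phi,\Psi\rangle_{\mathcal{L}_2(\Omega)}=\frac{1}{\pi}\int_{\Omega}\overline{\Psi(x,y)}\left(\,p.v.\!\int_{\Omega_y}\frac{\Phi(x',y)\,k^{\{n,c,l\}}(x'-x)}{x'-x}\,dx'\right)dx\,dy,
\]
then interchange the $x$- and $x'$-integrations, relabel $x\leftrightarrow x'$, and use that each $k^{\{n,c,l\}}$ (that is $1$, $J_0$ or $\sinc$) is an even function while $(x'-x)^{-1}=-(x-x')^{-1}$ is odd. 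The single surviving sign change turns $+\tfrac1\pi$ into $-\tfrac1\pi$ and keeps $\Psi(x',y)$ in the numerator, which is the asserted expression for $(\boldsymbol L_x^{\{n,c,l\}})^*$.

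\medskip
For the roof operator itself I would not repeat a singular-integral calculation but invoke the splitting \eqref{eq:3c.133}, which on $\Omega$ reads $\boldsymbol R_x^{\{n,c,l\},lin}=\boldsymbol L_x^{\{n,c,l\}}-M_g$ with $g:=\boldsymbol L_x^{\{n,c,l\}}1$ and $M_g$ the operator of multiplication by $g$; since $g$ is real, $M_g$ is self-adjoint, hence $(\boldsymbol R_x^{\{n,c,l\},lin})^*=(\boldsymbol L_x^{\{n,c,l\}})^*-M_g$. Substituting the formula just obtained for $(\boldsymbol L_x^{\{n,c,l\}})^*$ and $(M_g\Psi)(x,y)=\Psi(x,y)\cdot\tfrac1\pi\,p.v.\!\int_{\Omega_y}\tfrac{k^{\{n,c,l\}}(x'-x)}{x'-x}\,dx'$, and collecting both contributions under one principal-value integral, yields the claimed formula. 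The correction term appears with a $+$ sign rather than a $-$ precisely because $\boldsymbol L_x^{\{n,c,l\}}$ carries the factor $+\tfrac1\pi$ whereas its adjoint carries $-\tfrac1\pi$, so subtracting $M_g$ adds a term of the same sign as $(\boldsymbol L_x^{\{n,c,l\}})^*\Psi$.

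\medskip
The only delicate point is the interchange of the two integrations above: the inner one exists merely as a Cauchy principal value, so Fubini's theorem does not apply verbatim. I would justify it by regularization --- replace $p.v.\!\int_{\Omega_y}$ by $\int_{\Omega_y\setminus(x-\varepsilon,x+\varepsilon)}$, apply ordinary Fubini, and let $\varepsilon\to0$, using the $\mathcal{L}_2$-boundedness of the finite Hilbert transform $\boldsymbol L_x^{n}$ and of its modulated counterparts $\boldsymbol L_x^{c},\boldsymbol L_x^{l}$ established in part~I \cite{HuSha18_1} --- or, equivalently, by passing to the Fourier side, where the interchange becomes a Plancherel identity. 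Everything after that step is just bookkeeping of the two minus signs described above.
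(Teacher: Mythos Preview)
Your argument is correct. The paper itself does not prove the proposition here but defers to part~I \cite{HuSha18_1}, so a line-by-line comparison is not possible from this text; nonetheless, your route---compute $(\boldsymbol L_x^{\{n,c,l\}})^*$ directly from the $\mathcal{L}_2$ inner product by swapping the order of integration and exploiting that $k^{\{n,c,l\}}$ is even while $(x'-x)^{-1}$ is odd, then obtain $(\boldsymbol R_x^{\{n,c,l\},lin})^*$ from the splitting \eqref{eq:3c.133} and the self-adjointness of the real multiplication operator $M_g$---is the standard and natural derivation, and the sign bookkeeping you describe is accurate. Your treatment of the principal-value Fubini step via $\varepsilon$-regularization together with the $\mathcal{L}_2$-boundedness results from part~I is the right level of care.
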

\begin{proof}
See \cite{HuSha18_1}.
\end{proof}

\section{The discrete pyramid wavefront sensor}\label{chap:discretization}

The full continuous measurements $s_x(x,y)$ and $s_y(x,y)$ of the pyramid wavefront sensor are not available in practice. For the description of the discrete pyramid sensor we perform a division of the continuous two dimensional process into finitely many equispaced regions called subapertures. The data are then assumed to be averaged over every subaperture which corresponds to the finite sampling of the pyramid sensor. Note that in reality, the subaperture grid is predefined by the sensor's physics. Following the approach in \cite{Veri04}, we examine the sensor data as functions evaluated in the (discrete) middle points of the WFS subapertures. In the two dimensional case we consider quadratic subapertures of size $d \times d$ with $d=\frac{D}{n}$, where $D$ represents the telescope diameter, i.e., the primary mirror size, and $n$ the number of subapertures in one direction. \bigskip \par
Note that all considerations are valid for measurements both in $x$-direction $s_x(x,y)$ and $y$-direction $s_y(x,y)$, as well as for non-modulated, circularly, and linearly modulated data. Thus, we consider general measurements identified by $s(x,y)$. Discretizing $s(x,y)$ delivers $n^2$ data values $s_{jk}$ with $j,k=1,\dots , n$. \bigskip \par

For the following, we use the Dirac comb $\dirac_d$ defined as
\be{2.1111} \notag
\dirac_d\left(x,y\right) := \sum_{\ell=-\infty}^\infty{\sum_{m=-\infty}^\infty{\delta\left(x-\ell d,y-m d\right)}}.
\ee
where $\delta$ denotes the delta distribution. \bigskip\par 
The continuous signal is captured by the wavefront sensor as follows: \par
First, the average of the measurements over one subaperture is calculated. This is represented as a convolution of the continuous data $s(x,y)$ with a characteristic function $ \mathcal{X}_{\left[-d/2,d/2\right]^2}\left(x,y\right)$, i.e., 
\begin{align*}\label{eq:2.1}
\tilde{s}(x,y) 
&=\frac{1}{d^2}\int_{x-d/2}^{x+d/2}{\int_{y-d/2}^{y+d/2}{s(x',y') \ dy'} \ dx'} \\ \notag
&= \frac{1}{d^2}\int_{-\infty}^{\infty}{\int_{-\infty}^\infty{s(x',y')\cdot\mathcal{X}_{\left[-d/2,d/2\right]}\left(x-x'\right)\cdot\mathcal{X}_{\left[-d/2,d/2\right]}\left(y-y'\right)dy'} \ dx'} \\ \notag
&= \frac{1}{d^2}\int_{-\infty}^{\infty}{\int_{-\infty}^\infty{s(x',y')\cdot\mathcal{X}_{\left[-d/2,d/2\right]^2}\left(x-x',y-y'\right)dy'} \ dx'} \\ \notag
&= \frac{1}{d^2}\left(s \ast \mathcal{X}_{\left[-d/2,d/2\right]^2}\right)\left(x,y\right). \notag
\end{align*}
The discretization is carried out as a application of the Dirac comb $\dirac_d$ assuming that the measurements $s$ fulfill the necessary conditions on applying the distribution $\delta$. Herewith, we assign a discrete set of measurements $\overline{s}$ centered on the subapertures
\begin{equation*}
\overline{s}=  \langle  \dirac_d,\tilde{s}\rangle =\langle \sum_{\ell=-\infty}^{\infty}{\sum_{m=-\infty}^\infty{\delta\left(\cdot-\ell d,\cdot-m d\right)}},\tilde{s}\rangle
\end{equation*}
from floating average values $\tilde{s}(x,y)$ to the discrete set of subaperture middle points $\{\left(jd,kd\right):j,k \in \Z \}$. 

Finally, we restrict the number of measurements to the size of the region captured by the sensor. For several telescope systems the pupil $\Omega$ is annular instead of circular since a shade created by the secondary mirror prevents measurements on these areas. Thus, the light in the area of the central obstruction possibly does not produce reliable measurements~\cite{engler_spie_2018,HuShaOb18,ObRafShaHu18_proc,schwartz_spie2018,schwartz_ao4elt5}. 

Hence, the last step is realized as a multiplication with a second characteristic function
 \be{2.3} \notag
s_{j,k} = \left(\langle \dirac_d,\tilde{s}\rangle \cdot \mathcal{X}_{\Omega}\right)_{j,k}
\ee
for $j,k = 1,\dots, n$. To be precise, we usually consider less than $n^2$ measurements due to the annular shape of the aperture and ignore subapertures which are too less illuminated in order to produce reliable data. However, we will not specifically mention this fact throughout the paper.

\section{Iterative wavefront reconstruction methods}  \label{chap:iter_algorithms}
In this Section, we adapt well-known mathematical algorithms, namely the conjugate gradient method for the normal equation, the steepest descent algorithm, and Landweber iteration as well as modifications of these methods coupled with a Kaczmarz strategy to the problem of wavefront reconstruction from pyramid sensor data. By solving the WFS equations in the operator setting instead of forming matrices we minimize the computational complexity of the proposed methods.  
For wavefront reconstruction we solve the two integral equations  
\begin{align}\label{5.1a}
-\tfrac{1}{2} \  \boldsymbol R_x^{\{n,c,l\},lin}\Phi &=s_x \\ \label{5.1b}
\tfrac{1}{2} \  \boldsymbol R_y^{\{n,c,l\},lin}\Phi &=s_y
\end{align}
with $ \left[\boldsymbol R_x^{\{n,c,l\},lin},\boldsymbol R_y^{\{n,c,l\},lin}\right]$ representing a linearization of the roof WFS according to \eqref{roof_operator} and  $s=\left[s_x,s_y\right]$ pyramid sensor measurements. As a further simplification, we consider the inverse problem
\begin{equation}\label{5.1c}
\begin{split}
-\tfrac{1}{2} \ \boldsymbol L_x^{\{n,c,l\}}\Phi &=s_x \\
\tfrac{1}{2} \ \boldsymbol L_y^{\{n,c,l\}}\Phi &=s_y.
\end{split}
\end{equation}
For simplicity of notation we use $\boldsymbol Q := \tfrac{1}{2}\cdot  \left[-\boldsymbol R_x^{\{n,c,l\},lin},\boldsymbol R_y^{\{n,c,l\},lin}\right]$ in the following since the basic idea is the same for all types of modulation. Moreover, we only concentrate on solving the inverse problem \eqref{5.1a}-\eqref{5.1b}, but mention that solutions of \eqref{meas1} and \eqref{5.1c} can be calculated accordingly.\bigskip

To specify the representation of the incoming phase $\Phi$ and the measurements $s$ we denote the number of subapertures by $n$. There are various possible representations for the phase and the measurements, e.g., Zernike polynomials or bilinear spline functions.  We choose a representation that guarantees maximum computational efficiency, and thus assume that the incoming phase and the measurements are piecewise constant on the subapertures, i.e., 
\be{5.5} \notag
\Phi(x,y) = \sum\limits_{i=1}^n{\phi_i\mathcal{X}_{\Omega_i^y}(x)}, \qquad \qquad \qquad \qquad  s_x(x,y) = \sum\limits_{i=1}^n{s_{x,i}\mathcal{X}_{\Omega_i^y}(x)},
\ee
where $\left(\phi_i\right)_{1\le i \le n}, \left(s_{x,i}\right)_{1\le i \le n}$ denote basis coefficients and $\Omega_i^y = \left[x_{i-1}^y,x_i^y\right]$ the $i$-th subaperture of a row for fixed $y$. As the wavefront sensor provides two measurements (one in $x$- and one in $y$-direction), for every single subaperture, the suggestion of representing the measurements via piecewise constant functions describing the subaperture grid is reasonable. 
We calculate the involved operators as, e.g.,
{\small\begin{align}\label{5.6}
\left(\boldsymbol Q_x\Phi\right)(x,y) &=- \mathcal{X}_{\Omega}(x,y)\dfrac{1}{2\pi}  \int\limits_{\Omega_y}{\dfrac{\left[\Phi(x',y)-\Phi(x,y)\right]k^{\{n,c,l\}}(x'-x)}{x'-x}\ dx'} \notag \\ \notag
&=-\mathcal{X}_{\Omega}(x,y)\dfrac{1}{2\pi}  \int\limits_{\Omega_y}{\dfrac{\left[\sum\limits_{i=1}^n{\phi_i\mathcal{X}_{\Omega_i^y}(x')}-\sum\limits_{i=1}^n{\phi_i\mathcal{X}_{\Omega_i^y}(x)}\right]k^{\{n,c,l\}}(x'-x)}{x'-x}\ dx'} \\ \notag
&= -\mathcal{X}_{\Omega}(x,y)\dfrac{1}{2\pi} \ p.v. \left[\int\limits_{\Omega_y}{\dfrac{\sum\limits_{i=1}^n{\phi_i\mathcal{X}_{\Omega_i^y}(x')}k^{\{n,c,l\}}(x'-x)}{x'-x}\ dx'} - \int\limits_{\Omega_y}{\dfrac{\sum\limits_{i=1}^n{\phi_i\mathcal{X}_{\Omega_i^y}(x)}k^{\{n,c,l\}}(x'-x)}{x'-x}\ dx'} \right] \\ \notag
&=  -\mathcal{X}_{\Omega}(x,y)\dfrac{1}{2\pi}\sum\limits_{i=1}^n\phi_i \ \underbrace{ p.v. \  \left[  \ \int_{x_{i-1}^y}^{x_i^y}{\dfrac{k^{\{n,c,l\}}(x'-x)}{x'-x}\ dx' }-\mathcal{X}_{\Omega_i^y}(x) \int\limits_{\Omega_y}{\dfrac{k^{\{n,c,l\}}(x'-x)}{x'-x}\ dx'}\right]}_{=:\ \alpha_i^{\{n,c,l\}}\left(x,y\right)} \\
&=-\mathcal{X}_{\Omega}(x,y)\dfrac{1}{2\pi}\sum\limits_{i=1}^n{\phi_i\alpha_i^{\{n,c,l\}}(x,y)}.
\end{align}
}The functions $\alpha_i^{\{n,c,l\}}(x,y)$ are computed offline and do not influence the computational speed of the proposed methods. The implementation of all involved operators is performed analogously when choosing the basis representation \eqref{eq:5.5}. \bigskip

Now, we focus on concrete wavefront reconstruction algorithms for pyramid sensor data using iterative methods. In particular, we consider the conjugate gradient algorithm for the normal equation (CGNE), the steepest descent (SD), and Landweber iteration. 
Since the pyramid sensor provides two measurements $s_x$ and $s_y$, the above named approaches deliver two solutions $\Phi = \left[\Phi_x,\Phi_y\right]$, one in $x$- and one in $y$- direction. The final reconstruction $\Phi^{rec}$ is then computed as the average of the two temporary solutions. An alternative combination of the two measurements $s_x$ and $s_y$ using Kaczmarz loops is investigated in Section \ref{sec:Kaczmarz}. Note that the theory of the presented algorithms is mainly based on \cite{Engl,Louis89}. The considered norms are the $\mathcal{L}_2$-norms. 


\subsection{CGNE approach}
The conjugate gradient (CG) method is one of the most powerful algorithms for solving self-adjoint, positive (semi-)definite linear equations \cite{Bra87,Engl,Gil77,Hanke95,Hes52,KaNa72}. For solving the wavefront reconstruction problem we apply the conjugate gradient method to the normal equation
\begin{equation}\label{norm_eq}
\boldsymbol Q^*\boldsymbol Q\Phi = \boldsymbol Q^*s.
\end{equation}
Let $\boldsymbol Q^\dagger$ denote the Moore-Penrose generalized inverse. The CG-iterates $\left(\Phi_i\right)$ converge to $\boldsymbol Q^\dagger s$ for all $s \in \mathcal{D}(\boldsymbol Q^\dagger)$ \cite{Engl} by requiring the fewest iterations among all semiiterative methods.  \bigskip

The CGNE method (Algorithm $1$) applied to the inverse problem of wavefront reconstruction from pyramid wavefront sensor data is described by: \bigskip\par
\begin{tabular}{l c r}
\hline
\textbf{Algorithm 1} CGNE for pyramid sensors \\
\hline
choose $\Phi_0$, initialize $d_0=s-\boldsymbol Q\Phi_0$, $p_1=s_0=\boldsymbol Q^*d_0$ \\
for $i=1,\dots K$ do \\
\quad $q_i = \boldsymbol Q p_i$ \\
\quad $ \alpha_i = ||s_{i-1}||^2/||q_i||^2$\\ 
\quad $\Phi_i = \Phi_{i-1}+\alpha_ip_i$\\
\quad $d_i = d_{i-1}-\alpha_iq_i$ \\
\quad $s_i = \boldsymbol Q^*d_i$ \\
\quad $\beta_i = ||s_i||^2/||s_{i-1}||^2$ \\
\quad $p_{i+1} = s_i+\beta_ip_i$ \\
endfor \\
$\Phi^{rec} =  \left(\Phi_{x,K}+\Phi_{y,K}\right)/2$ \\
\hline
\end{tabular}
\vspace*{1cm}

\subsection{Steepest descent approach}

For solving the system \eqref{5.1a}-\eqref{5.1b} we are additionally interested in the method of steepest descent where we consider different choices of the step sizes in the iterative process. For pyramid sensors, we use the SD method (Algorithm $2$) applied to the least-squares functional
\be{5.8}
J\left(\Phi\right) = \left|\left|\boldsymbol Q\Phi-s\right|\right|^2_{\mathcal{L}_2} \rightarrow \min.
\ee
The method of steepest descent was originally introduced by Cauchy \cite{Cauchy1847} as one of the most basic procedures to minimize a differentiable functional. A popular step size is determined by an exact line search in the direction of the negative gradient. Alternative choices of the step size have already been considered, e.g., in \cite{Sa16,SaRa15} and will be discussed below for the problem of wavefront reconstruction from pyramid data using the SD method. The gradient of the classical least-squares functional is given by
\be{5.9}
J'\left(\Phi\right) = \boldsymbol Q^*\left(\boldsymbol Q\Phi-s\right)
\ee
and the resulting algorithm reads as: \bigskip\par
\begin{tabular}{l c r}
\hline
\textbf{Algorithm 2} Steepest descent method for pyramid sensors \\
\hline
choose $\Phi_0$ \\
for $i=1,\dots K$ do \\
\quad $d_{i-1} = -J'\left(\Phi_{i-1}\right)$ \\
\quad $ \tau_{i-1}= \min\limits_{t \in [0,\infty)} J\left(\Phi_{i-1}+td_{i-1}\right)$\\ 
\quad $ \Phi_i = \Phi_{i-1} + \tau_{i-1}d_{i-1}$\\ 
endfor \\
$\Phi^{rec} =  \left(\Phi_{x,K}+\Phi_{y,K}\right)/2$ \\
\hline
\end{tabular}
\vspace*{1cm}


\subsubsection{Step size choices and convergence}

The speed of convergence of the gradient iteration
\begin{align*}
\Phi_{i}& = \Phi_{i-1} + \tau_{i-1}d_{i-1} \\
d_{i-1}&=-J'(\Phi_{i-1})
\end{align*}
depends highly on the choice of the step size $\tau_{i}$. We consider the classical steepest descent (line search) step size that is defined by $$ \tau_i^{SD} = \min\limits_{t\in [0.\infty)} J\left(\Phi_i+td_i\right).$$ This means that an exact line search is performed in the direction of steepest descent which corresponds to the direction of the negative gradient. For the least-squares functional \eqref{eq:5.8} and corresponding derivative \eqref{eq:5.9} the steepest descent step size with $d_i = -\boldsymbol Q^*\left(\boldsymbol Q\Phi-s\right)$ reads as 
\begin{equation}\label{tau_SD}
\tau_i^{SD} = \dfrac{\left|\left|d_i\right|\right|^2}{\left|\left|\boldsymbol Qd_i\right|\right|^2}
\end{equation}
and results in the so called \textit{Cauchy method}.

If we minimize the gradient norm along the search direction, we obtain another line search method for finite dimensions, namely the method of \textit{minimal gradient} (MG) \cite{Dai03} given by 
\begin{equation}\label{tau_MG}
 \tau_i^{MG} = \dfrac{\left|\left|\boldsymbol Qd_i\right|\right|^2}{\left|\left|\boldsymbol Q^*\boldsymbol Qd_i\right|\right|^2}. 
\end{equation} 
From Cauchy-Schwarz inequality it follows $\tau_i^{MG}\le \tau_i^{SD}$.

Because of  zigzagging between consecutive steps the SD method suffers from slow convergence in some cases. To overcome these effects, a fast and efficient alternative step size choice was introduced by \textit{Barzilai and Borwein} (BB) in \cite{Bar88}. The BB technique is motivated by quasi-Newton methods and derived from a two-point approximation to the secant equation. There exist two versions of the BB method which are defined by 
$$ \tau_i^{BB1} = \dfrac{\langle \Delta \Phi_i,\Delta d_i\rangle}{\langle \Delta d_i,\Delta d_i\rangle} \qquad \qquad \text{and} \qquad \qquad  \tau_i^{BB2} = \dfrac{\langle \Delta \Phi_i,\Delta \Phi_i\rangle}{\langle \Delta \Phi_i,\Delta d_i\rangle}$$
with $\Delta \Phi_i = \Phi_i-\Phi_{i-1}$ and $\Delta d_i = d_i-d_{i-1}$. Plugging in the calculations corresponding to Algorithm $2$ we obtain
\begin{align*}
\Delta \Phi_i &=  \Phi_i-\Phi_{i-1} = \tau_{i-1}d_{i-1} \\
\Delta d_i &= d_i-d_{i-1} = -\boldsymbol Q^*\boldsymbol Q\left(\Phi_i-\Phi_{i-1}\right) = -\tau_{i-1}\boldsymbol Q^*\boldsymbol Qd_{i-1} \\
d_i &=\left(I-\tau_{i-1}\boldsymbol Q^*\boldsymbol Q\right)d_{i-1}
\end{align*}
since the involved operators are linear. Therefore, the BB step sizes are rewritten as
$$\tau_i^{BB1}=\dfrac{\left|\left|\boldsymbol Qd_{i-1}\right|\right|^2}{\left|\left|\boldsymbol Q^*\boldsymbol Qd_{i-1}\right|\right|^2} \qquad \qquad \text{and} \qquad \qquad \tau_i^{BB2}=\dfrac{\left|\left|d_{i-1}\right|\right|^2}{\left|\left|\boldsymbol Q d_{i-1}\right|\right|^2},$$ 
i.e., $$\tau_i^{BB1} = \tau_{i-1}^{MG} \qquad \qquad \text{and} \qquad \qquad \tau_i^{BB2}=\tau_{i-1}^{SD}.$$ The idea is to use additional information of the previous iteration to compute the step size for the current iteration. Once again with the Cauchy-Schwarz inequality we obtain $\tau_i^{BB1}\le\tau_i^{BB2}$. Generally, while the SD and MG method decrease monotonically, the BB step size choices are non-monotone as the error behaves non-monotonously, i.e., $\left|\left|\Phi-\Phi_{i+1}\right|\right|\le\left|\left|\Phi-\Phi_i\right|\right|$ for the true solution $\Phi$ is not fulfilled for every iteration $i$. Nevertheless, the BB method converges to a solution of \eqref{eq:5.8} as found in \cite{Ray93}.\bigskip

The \textit{Cauchy-Barzilai-Borwein} (CBB) step size is based on the idea to use the SD and BB step size alternating. The method, which was introduced in 2003 and is also called \textit{alternate step size} (AS) \textit{gradient method}, aims at reducing the zigzag-effect of the Cauchy method \cite{Dai03_1}, and therefore leads to a faster convergence. The promising alternative to the BB method reads as 
$$ \tau_i^{CBB} = \tau_i^{AS} = \begin{cases} \tau_i^{SD}, \qquad \  \text{for i odd}, \\ \tau_i^{BB2}, \qquad \text{for i even}. \end{cases} $$
Due to $\tau_i^{BB2} = \tau_{i-1}^{SD}$, we use the same step size twice in two consecutive iterations. 
An alternate version of the MG method called \textit{alternate minimization} (AM) \textit{gradient method} was proposed in \cite{Dai03} having an SD iteration for every second step. Generally, the SD method becomes faster when one non-monotone (e.g., BB) step is made even after several SD steps \cite{Zhou06}.
In addition, a variety of step size choices have been introduced using combinations or shortened step size versions of the above mentioned options. \bigskip

To reduce the computational effort of the algorithms, one can use a fixed step size. Step sizes which are tuned heuristically depend mainly on the size of the telescope and the resolution of the wavefront sensor (discretization). For a fixed step size $\tau_i = \beta$ the steepest descent algorithm for the least-squares functional \eqref{eq:5.8} reduces to the standard Landweber iteration.


\subsection{Landweber approach}

For the Landweber iteration \cite{Landweber51}, the normal equation \eqref{norm_eq} is transformed into the equivalent fixed point equation
$$\Phi = \Phi+\boldsymbol Q^*\left(s-\boldsymbol Q\Phi\right).$$
In order to ensure convergence by $\left|\left|\boldsymbol Q\right|\right|\le 1$ we introduce a relaxation parameter $0<\beta\le \left|\left|\boldsymbol Q\right|\right|^{-2}$ and iterate by
$$ \Phi_{i} = \Phi_{i-1} + \beta \boldsymbol Q^*\left(s-\boldsymbol Q\Phi_{i-1}\right), \qquad i \in \N.$$
Then, $\left(\Phi_i\right)$ converges to a least-squares solution of \eqref{5.1a}-\eqref{5.1b} for $s \in \mathcal{D}(\boldsymbol Q^\dagger)$ \cite{Engl}. \bigskip
 
The Landweber iteration modified for wavefront reconstruction based on pyramid sensor measurements (Algorithm $3$) reads as: \bigskip\par
 
\begin{tabular}{l c r}
\hline
\textbf{Algorithm 3} Landweber iteration for pyramid sensors \\
\hline
choose $\Phi_0$, set relaxation parameter $\beta$ \\
for $i=1,\dots K$ do \\
\quad $ \Phi_{i} = \Phi_{i-1} + \beta \boldsymbol Q^*\left(s-\boldsymbol Q\Phi_{i-1}\right)$\\ 
endfor \\
$\Phi^{rec} =  \left(\Phi_{x,K}+\Phi_{y,K}\right)/2$ \\
\hline
\end{tabular}
\vspace*{1cm}

Besides the above discussed methods for pyramid sensors, there already exist several algorithms providing two reconstructions, one from data $s_x$ and one from data $s_y$ \cite{Hut17,Shatokhina_PhDThesis,Shat17}. Since in the reconstructions obtained by averaging the two solutions we experienced distinct horizontal and vertical artifacts, the aim is to combine the reconstructions already during the iteration steps. For this reason we investigate Kaczmarz methods in which the two data sets are used alternating.


\subsection{Kaczmarz methods for wavefront reconstruction from pyramid sensor data} \label{sec:Kaczmarz}

For the reconstruction of the incoming wavefront, the pyramid sensor provides two data sets $s_x$ and $s_y$. If the reconstruction were based on the full pyramid model, the incoming phase $\Phi$ either could be reconstructed solely from measurements $s_x$ or solely from $s_y$ because the null space of the operators consists only of the global piston mode, which anyway does not influence the imaging quality. However, in case the reconstruction algorithms utilize the roof sensor model, both data need to be used due to different null spaces of the single operators. Altogether, there are several facts that support the usage of both data sets. On the one hand, we expect better reconstruction quality in case we utilize more information. This argument is additionally strengthened by the presence of noise in the sensor measuring process. On the other hand, deeper investigations of the underlying operators in $x$- and $y$-direction show that they have different null spaces, i.e., depending on the underlying model of the reconstructors there exist modes that cannot be reconstructed. For instance, pyramid and roof wavefront sensors are not able to detect a constant added to the incoming phase $\Phi$. This undetectable constant, called piston mode (mode of order $0$), has no influence on the measurements $s$. 
In order to characterize effects that are invisible in sensor data we discuss selected wavefront modes (of order $0$ and $1$) which are elements of the null space of  the roof wavefront sensor operators, i.e., phase elements that deliver measurements equal to zero. For the following investigations, we will consider the mathematical forward model of the linearized roof sensor $\boldsymbol R^{lin}$ and analyze the null spaces of the corresponding operators described by
$$\mathcal{N}\left(\boldsymbol R^{lin}\right):=\{\Phi\in \mathcal{H}^{11/6}\left(\R^2\right)| \ \boldsymbol R^{lin}\Phi = 0\}.$$

We study the response of the linearized roof sensor to a global piston mode shown in Figure \ref{fig:null_space} left. Hence, we define $$\Phi_{piston}(x,y)=c\cdot \mathcal{X}_{\Omega}(x,y),$$ where $c \in \R$ is a constant. Furthermore, we analyze how the sensor responses to modes of order $1$ called tip \& tilt modes (see Figure \ref{fig:null_space} middle and right) represented by $$\Phi_{tip/tilt}(x,y)=\left(ax+by\right)\cdot \mathcal{X}_{\Omega}(x,y)$$ for $a,b \in \R$.

\begin{prop}\label{4.2}
Constant functions $c\cdot \mathcal{X}_{\Omega}$ with $c \in \mathbb{R}$ are elements of the null space of the linearized roof sensor operators $\boldsymbol R^{\{n,c,l\},lin} = \left[\boldsymbol R_x^{\{n,c,l\},lin}, \boldsymbol R_y^{\{n,c,l\},lin}\right]$ with $\boldsymbol R_x^{\{n,c,l\},lin}$ defined in \eqref{roof_operator}. Moreover, tip signals $cx\cdot \mathcal{X}_{\Omega}(x,y)$ are in the null space of $\boldsymbol R_y^{\{n,c,l\},lin}$ and tilt signals $cy\cdot \mathcal{X}_{\Omega}(x,y)$ are in the null space of $\boldsymbol R_x^{\{n,c,l\},lin}$.
\end{prop}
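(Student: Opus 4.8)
The plan is to exploit the fact that the linearized roof operators in \eqref{roof_operator} act on the phase only through a \emph{difference} of its values, so that any phase which is constant along the direction in which the singular integral is taken is annihilated. First I would note that the integrand of $\boldsymbol R_x^{\{n,c,l\},lin}\Phi$ contains the factor $\Phi(x',y)-\Phi(x,y)$, with $x'$ ranging over $\Omega_y$ and the second argument held fixed, and symmetrically that the integrand of $\boldsymbol R_y^{\{n,c,l\},lin}\Phi$ contains $\Phi(x,y')-\Phi(x,y)$, with $y'$ ranging over $\Omega_x$ and $x$ held fixed. Since the modulation kernel $k^{\{n,c,l\}}$ only multiplies this difference, whenever the difference vanishes identically on the integration path the whole integrand is zero and the $p.v.$-integral is zero; this makes the argument uniform in the three modulation cases $\{n,c,l\}$.

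Next I would treat the piston mode $\Phi_{piston}=c\,\mathcal{X}_\Omega$. Because the operator already carries the outer factor $\mathcal{X}_\Omega(x,y)$, it suffices to evaluate the integral for $(x,y)\in\Omega$. For such $(x,y)$ one has $x\in\Omega_y$ by definition of $\Omega_y$ as the $x$-slice of $\Omega$ at height $y$, and for every $x'\in\Omega_y$ the point $(x',y)$ again lies in $\Omega$; hence $\Phi_{piston}(x',y)=c=\Phi_{piston}(x,y)$ and the bracket $[\Phi(x',y)-\Phi(x,y)]$ vanishes pointwise. Therefore $\boldsymbol R_x^{\{n,c,l\},lin}\Phi_{piston}=0$, and the same computation with $x$ and $y$ interchanged gives $\boldsymbol R_y^{\{n,c,l\},lin}\Phi_{piston}=0$, so $\Phi_{piston}\in\mathcal{N}(\boldsymbol R^{\{n,c,l\},lin})$. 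For the tip and tilt modes the argument is identical once one notices which variable is frozen in each integral: for $\Phi_{tip}=cx\,\mathcal{X}_\Omega$ and the operator $\boldsymbol R_y^{\{n,c,l\},lin}$ the integration variable is $y'$ while $x$ is fixed, so for $(x,y)\in\Omega$ and $y'\in\Omega_x$ one gets $\Phi_{tip}(x,y')=cx=\Phi_{tip}(x,y)$ and the integral vanishes; symmetrically, for $\Phi_{tilt}=cy\,\mathcal{X}_\Omega$ and $\boldsymbol R_x^{\{n,c,l\},lin}$ the $x$-slice integral sees $\Phi_{tilt}(x',y)=cy=\Phi_{tilt}(x,y)$ and again vanishes. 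Alternatively, one could route the whole argument through the splitting \eqref{eq:3c.133}: on $\Omega$ one has $(\boldsymbol L_x^{\{n,c,l\}}\mathcal{X}_\Omega)(x,y)=(\boldsymbol L_x^{\{n,c,l\}}1)(x,y)$, whence $\boldsymbol R_x^{\{n,c,l\},lin}(c\,\mathcal{X}_\Omega)=\mathcal{X}_\Omega\,[\,c\,\boldsymbol L_x^{\{n,c,l\}}1-c\,\boldsymbol L_x^{\{n,c,l\}}1\,]=0$, and likewise for the tilt mode because $cy$ is constant with respect to the $x$-integration.

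The computation itself is elementary; the only point that needs care — and the one I would spell out explicitly — is the bookkeeping of the cutoff $\mathcal{X}_\Omega$. One has to use that $\Omega_y$ is precisely the intersection of $\Omega$ with the horizontal line through $y$ (and $\Omega_x$ the analogous vertical slice), so that the cutoff built into the definition of $\Phi_{piston}$, $\Phi_{tip}$, $\Phi_{tilt}$ does not destroy the constancy of the phase along the integration path; without this matching of the two occurrences of $\Omega$ the difference in the integrand would not be identically zero. Once this is observed there is no issue with the singularity at $x'=x$ (resp.\ $y'=y$), since the principal-value integral of the identically zero function is zero irrespective of the kernel $k^{\{n,c,l\}}$, and the proof closes.
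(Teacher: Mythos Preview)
Your proof is correct and follows essentially the same route as the paper: both arguments observe that the integrand of $\boldsymbol R_x^{\{n,c,l\},lin}$ (resp.\ $\boldsymbol R_y^{\{n,c,l\},lin}$) depends on $\Phi$ only through the difference $\Phi(x',y)-\Phi(x,y)$ (resp.\ $\Phi(x,y')-\Phi(x,y)$), so that any phase constant along the integration variable is annihilated. Your treatment is in fact slightly more careful than the paper's in spelling out why the cutoff $\mathcal{X}_\Omega$ does not spoil the constancy along the integration path; the paper simply writes $c-c=0$ without addressing this, and it additionally records the non-vanishing expressions for $\boldsymbol R_x^{\{n,c,l\},lin}\Phi_{tip}$ and $\boldsymbol R_y^{\{n,c,l\},lin}\Phi_{tilt}$, which are not needed for the statement itself.
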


\begin{proof}
Global phase piston modes $c\cdot \mathcal{X}_{\Omega}$ are in the null space of the roof sensor operators because of
\begin{align*}
\left(\boldsymbol R_x^{\{n,c,l\},lin}\Phi_{piston}\right)(x,y) &= \mathcal{X}_{\Omega}(x,y)\dfrac{1}{\pi} \int\limits_{\Omega_y}{\dfrac{\left[\Phi_{piston}(x',y)-\Phi_{piston}(x,y)\right]k^{\{n,c,l\}}(x'-x)}{x'-x}\ dx'} \\
& = \mathcal{X}_{\Omega}(x,y)\dfrac{1}{\pi} \int\limits_{\Omega_y}{\dfrac{\left[c-c\right]k^{\{n,c,l\}}(x'-x)}{x'-x}\ dx'} =0
\end{align*}
and $\boldsymbol R_y^{\{n,c,l\},lin}(x,y)$ respectively. \bigskip

For exact investigations of tip \& tilt modes, we split $\Phi_{tip/tilt}(x,y)$ into $$\Phi_{tip}(x,y) = ax\cdot  \mathcal{X}_{\Omega}(x,y) \qquad \text{and} \qquad  \Phi_{tilt}(x,y) = by\cdot \mathcal{X}_{\Omega}(x,y).$$
Then, we consider
\begin{align*}
\left(\boldsymbol R_x^{\{n,c,l\},lin}\Phi_{tip}\right)(x,y) &= \mathcal{X}_{\Omega}(x,y)\dfrac{1}{\pi} \int\limits_{\Omega_y}{\dfrac{{\left[\Phi_{tip}(x',y)-\Phi_{tip}(x,y)\right]}k^{\{n,c,l\}}(x'-x)}{x'-x}\ dx'} \\
&= \mathcal{X}_{\Omega}(x,y)\dfrac{1}{\pi} \int\limits_{\Omega_y}{\dfrac{{\left[a\left(x'-x\right)\right]}k^{\{n,c,l\}}(x'-x)}{x'-x}\ dx'} \\
&= \mathcal{X}_{\Omega}(x,y)\dfrac{1}{\pi} \int\limits_{\Omega_y}{a\cdot k^{\{n,c,l\}}(x'-x)\ dx'} 
\end{align*}
and
\begin{align*}
\left(\boldsymbol R_y^{\{n,c,l\},lin}\Phi_{tip}\right)(x,y) &= \mathcal{X}_{\Omega}(x,y)\dfrac{1}{\pi} \int\limits_{\Omega_x}{\dfrac{{\left[\Phi_{tip}(x,y')-\Phi_{tip}(x,y)\right]}k^{\{n,c,l\}}(y'-y)}{y'-y}\ dy'} \\
&=0
\end{align*}
as well as
\begin{align*}
\left(\boldsymbol R_x^{\{n,c,l\},lin}\Phi_{tilt}\right)(x,y) &= \mathcal{X}_{\Omega}(x,y)\dfrac{1}{\pi} \int\limits_{\Omega_y}{\dfrac{{\left[\Phi_{tilt}(x',y)-\Phi_{tilt}(x,y)\right]}k^{\{n,c,l\}}(x'-x)}{x'-x}\ dx'} \\
&= 0
\end{align*}
and
\begin{align*}
\left(\boldsymbol R_y^{\{n,c,l\},lin}\Phi_{tilt}\right)(x,y) &= \mathcal{X}_{\Omega}(x,y)\dfrac{1}{\pi} \int\limits_{\Omega_x}{\dfrac{{\left[\Phi_{tilt}(x,y')-\Phi_{tilt}(x,y)\right]}k^{\{n,c,l\}}(y'-y)}{y'-y}\ dy'} \\
&=\mathcal{X}_{\Omega}(x,y)\dfrac{1}{\pi} \int\limits_{\Omega_x}{\dfrac{\left[b\left(y'-y\right)\right]k^{\{n,c,l\}}(y'-y)}{y'-y}\ dy'} \\
&=\mathcal{X}_{\Omega}(x,y)\dfrac{1}{\pi} \int\limits_{\Omega_x}{b\cdot k^{\{n,c,l\}}(y'-y)\ dy'} .
\end{align*}
Altogether, we obtain that tip is in the null space of $\boldsymbol R_y^{\{n,c,l\},lin}$ and tilt in the null space of  $\boldsymbol R_x^{\{n,c,l\},lin}$.
\end{proof}

\begin{figure}[!ht]
\centering
\includegraphics[scale = 0.8]{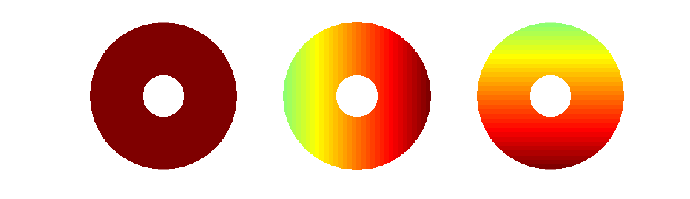}
\caption{The figures indicate piston, tip and tilt mode (from left to right).}
\label{fig:null_space}
\end{figure}

If we further simplify the roof sensor model by excluding the second term in \eqref{roof_operator} and consider the operators $\boldsymbol L^{\{n,c,l\}}$ defined in \eqref{3c.13}, we note that these operators are injective, i.e., $\mathcal{N}\left(\boldsymbol L^{\{n,c.l\}}\right) = \{0\}$. This assertion follows from the injectivity of the finite Hilbert transform shown in \cite{Engl97}. \bigskip

Note that the results of Proposition \ref{4.2} are directly transferred to the full non-linear roof sensor operators $\boldsymbol R^{\{n, c, l\}}$ discussed in part I of the paper \cite{HuSha18_1}. As soon as we consider functions including various powers of $x$ or $y$ in $\Phi_{tip/tilt}$, the corresponding measurements in $x$- or $y$-direction are equal to zero as well. The fact that $\boldsymbol R_{x}^{\{n,c,l\},lin}$ and $\boldsymbol R_{y}^{\{n,c,l\},lin}$ have different null spaces intensifies the requirement of an appropriate combination of the two data sets $s_x$ and $s_y$ for reconstruction methods which are based on the roof sensor model. \bigskip

One idea to appropriately combine both data sets is to reconstruct independently in both directions and average the reconstructions at the end as already considered for the CGNE, SD, and Landweber iteration above. However, it is not guaranteed that the final (averaged) solution $\Phi^{rec}$ fulfills both equations \eqref{5.1a}-\eqref{5.1b}. Another possibility is two consider $\left[\boldsymbol Q_x,\boldsymbol Q_y\right]$ as one single operator and a third one is to use a Kaczmarz strategy \cite{Ka37,Nat86} which is computationally cheaper and for which it is guaranteed that the equations \eqref{5.1a}-\eqref{5.1b} are fulfilled for the final solution. Kaczmarz methods, in general, have been developed for solving linear systems of equations. We have decided to implement Kaczmarz strategies for the pyramid sensor in combination with several of the above discussed algorithms.

\subsubsection{Landweber-Kaczmarz approach}

In practice, the Landweber algorithm is used because it is simple and each iteration is cheap. Though, the process usually requires a high number of iterations. Anyway, we do not experience slow convergence for reconstruction from pyramid data due to a close similarity between adjoint and inverse operators as investigated in \cite{Hut17} for the non-modulated sensor, i.e., for the finite Hilbert transform operator. When using proper basis functions for the representation of the incoming wavefront $\Phi$ and the measurements $s=\left[s_x,s_y\right]$ as derived in \eqref{5.6}, the involved operators can be precomputed offline. These facts make the Landweber iteration coupled with a Kaczmarz strategy interesting for wavefront reconstruction from pyramid sensor data. A general convergence analysis of the linear Landweber-Kaczmarz method can be found in \cite{Kow02}. \bigskip

In the linear setting the Landweber-Kaczmarz method for wavefront reconstruction from pyramid wavefront sensor measurements (Algorithm $4$) reads as: \bigskip\par

\begin{tabular}{l c r}
\hline
\textbf{Algorithm 4} Landweber-Kaczmarz iteration for pyramid sensors \\
\hline
choose $\Phi_0$, set relaxation parameters $\beta_1$, $\beta_2$ \\
for $i=1,\dots K$ do \\
\quad $\Phi_{i,0} = \Phi_{i-1}$ \\
\quad $ \Phi_{i,1} = \Phi_{i,0} + \beta_1 \boldsymbol Q_x^*\left(s_x-\boldsymbol Q_x\Phi_{i,0}\right)$\\ 
\quad $ \Phi_{i,2} = \Phi_{i,1} + \beta_2 \boldsymbol Q_y^*\left(s_y-\boldsymbol Q_y\Phi_{i,1}\right)$\\ 
\quad $\Phi_i=\Phi_{i,2}$ \\
endfor \\
$\Phi^{rec} = \Phi_K$ \\
\hline
\end{tabular}
\vspace*{1cm}


\subsubsection{Steepest descent-Kaczmarz approach}

The idea of modified steepest descent algorithms coupled with a Kaczmarz strategy is comparable to the method described in \cite{Cez08} for non-linear problems. As in the previous method, we cyclically consider each measurement equation \eqref{5.1a} and \eqref{5.1b}. \bigskip\par
Hence, for 
\be{5.8b}
J_x\left(\Phi\right) := \left|\left|\boldsymbol Q_x\Phi-s\right|\right|^2_{\mathcal{L}_2}, \qquad \qquad \qquad J_y\left(\Phi\right) := \left|\left|\boldsymbol Q_y\Phi-s\right|\right|^2_{\mathcal{L}_2},
\ee
the steepest descent-Kaczmarz (SD-K) method for wavefront reconstruction using pyramid sensors (Algorithm $5$a) is described by: \bigskip\par
\begin{tabular}{l c r}
\hline
\textbf{Algorithm 5a} Steepest descent-Kaczmarz method for pyramid sensors \\
\hline
choose $\Phi_0$ \\
for $i=1,\dots K$ do \\
\quad $\Phi_{i-1,0} = \Phi_{i-1}$ \\
\quad $d_{i-1,1} = -J_x'\left(\Phi_{i-1,0}\right)$ \\
\quad $ \tau_{i-1,1}= \min\limits_{t \in [0,\infty)} J_x\left(\Phi_{i-1,0}+td_{i-1,1}\right)$\\ 
\quad $ \Phi_{i-1,1} = \Phi_{i-1,0} + \tau_{i-1,1}d_{i-1,1}$\\ 
\quad $d_{i-1,2} = -J_y'\left(\Phi_{i-1,1}\right)$ \\
\quad $ \tau_{i-1,2}= \min\limits_{t \in [0,\infty)} J_y\left(\Phi_{i-1,1}+td_{i-1,2}\right)$\\ 
\quad $ \Phi_{i-1,2} = \Phi_{i-1,1} + \tau_{i-1,2}d_{i-1,2}$\\ 
\quad $\Phi_i=\Phi_{i-1,2}$ \\
endfor \\
$\Phi^{rec} = \Phi_K$ \\
\hline
\end{tabular} 
\vspace*{1cm}

During an observation, the reconstructions have to be repeated up to $0.3$ milliseconds.  Assuming that the incoming wavefront do not change much from one time steps to the next and, in particular, tip~\&~tilt do not change significantly, another idea (implemented in Algorithm $5$b) would be to reconstruct in $x$-direction for even time steps $t$ and in $y$-direction for the proximate odd time $t+1$ steps. The big advantage of Algorithm $5b$ consists in the reduction of the computational demand by more than $50\%$ compared to the normal SD approach.

\bigskip\par
\begin{tabular}{l c r}
\hline
\textbf{Algorithm 5b} Modified steepest descent-Kaczmarz method for pyramid sensors \\
\hline
if ($t$ mod $2 = 0$) do \\
\quad apply Algorithm $2$ in $x$-direction only \\
else if  \\
\quad apply Algorithm $2$ in $y$-direction only \\
endif \\
\hline
\end{tabular} 
\vspace*{1cm}

The post-loop step of Algorithm $1$-$3$, i.e., the averaging of the two reconstructions is not necessary for Algorithm $4$, $5$a, and $5$b since we only obtain one reconstruction $\Phi_K$. Please note that for the Kaczmarz-type methods it is merely necessary to choose one initial guess $\Phi_0$ instead of two as required for Algorithm~$1$-$3$.

\section{Numerical results}\label{chap:numerics}
We test the quality of the reconstruction approaches by continuously correcting the incoming wavefront in closed loop AO. In this setting, the wavefront sensor measures the incoming phase after passing the deformable mirror, i.e., the sensor sees the difference between the incoming wavefront and the correction induced by the mirror. For numerical simulations, we use the end-to-end simulation tool Octopus developed by ESO \cite{LeLouarn_OCTOPUS_04,LVK06}. As already mentioned in the introduction, we test the performance of the proposed methods for an ELT-sized telescope system. In particular, we consider the METIS instrument on the $39$~m sized ELT for non-modulated and modulated pyramid wavefront sensors having a $74\times 74$ spatial sampling. Although the observing facility has a primary mirror diameter of $39$~m, for METIS only the inner $37$~m are used.
The incoming wavefronts are simulated by a realization of the von Karman atmospheric model having $35$ layers. The system runs at a frequency of $1$~kHz for the non-modulated sensor and at a frequency of $500$~Hz for the modulated sensor. The mirror geometry in the simulations corresponds to the M4 geometry planned for the ELT which was just recently incorporated in Octopus. For the temporal control of the algorithms we use a simple integrator and optimize the gains with a resolution of $0.1$. 

As a quality measure we use the long-exposure (LE) Strehl ratio, which is computed as the average on-axis Strehl ratio for all performed time steps. The Strehl ratio is defined as the ratio of the peak aberrated image intensity from a point source compared to the maximum attainable intensity using an ideal optical system limited only by diffraction over the system's aperture. The maximum achievable value is $1$. 

The quality results of the algorithms are expressed in terms of long-exposure Strehl ratios at an observing wavelength of $2.2~\mu m$ (K-band). 
Note that according to the specifications of the METIS instrument, K-band is not included in the science range. Instead, observations are performed in L-band (at $\lambda_1=3.0~\mu m$, $\lambda_2=3.7~\mu m$), in M-band (at $\lambda=4.7~\mu m$) and in N-band (at $\lambda=10.0~\mu m$). For analysis purposes, however, we find it useful to have the output at a shorter wavelength as well. As such, we use $\lambda=2.2~\mu m$ in the K-band where the imaging is performed.

In our numerical tests we evaluate the reconstruction quality in a range of photon flux levels between 50 and 10000 photons per subaperture per frame for median atmospheric conditions. The simulation parameters are summarized in Table~\ref{table:1}. In order to speed up convergence to the closed loop, in the first 13 time steps we apply the CuReD reconstructor \cite{Ros11,Ros12}, which corrects mainly for the low frequencies in the wavefront.  \bigskip

\begin{table}
\renewcommand{\arraystretch}{1.2}
\begin{center}
\begin{small}
\begin{tabular}{ l   l  } 
 \hline
 \textbf{Simulation parameters} & \\
 \hline
 telescope diameter & $37$~m  \\
central obstruction & $30\%$ \\
science target & on-axis (SCAO)  \\ 
WFS & PWFS \\
 sensing band & K ($2.2$~$\mu$m) \\
evaluation bands & K ($2.2$~$\mu$m) \\
 & L ($3.0, 3.7$~$\mu$m) \\
 & M ($4.7$~$\mu$m) \\
 & N ($10.0$~$\mu$m) \\
modulation  & $[0,4]$ $\lambda/D$ \\
controller & integrator \\
atmospheric model & von Karman \\
number of simulated layers & $35$ \\
outer scale $L_0$ & $25$~m \\
atmosphere & median \\
Fried radius $r_0$ at $\lambda = 500$~nm & $0.157$~m \\
number of subapertures & $74 \times 74$\\
number of active subapertures & $[3912,4128]$ out of $5476$ \\
frame rate & $[1000,500]$~Hz  \\ 
DM delay & $1$ \\
detector read-out noise & $1$ electron/pixel \\
background flux & $0.000321$ photons/pixel/frame \\
photon flux & $[50,100,1000,10000]$ \\
iterations per simulation & $500$\\ 
 \hline
\end{tabular} \caption{Test case setting.}\label{table:1}
\end{small}
\end{center}
\end{table}


\subsection{Optimal step size choice for SD iteration in the context of WF reconstruction from pyramid data}

Before we compare the reconstruction quality of all proposed methods, we investigate the optimal step size choice for the steepest descent algorithm applied to WF reconstruction. For that analysis we consider the METIS instrument on the ELT having a pyramid sensor without modulation incorporated. The simulation parameters are identical to those listed in Table~\ref{table:1}. As photon flux, we use 10000 photons per subaperture per frame. The reconstruction quality is evaluated after 500 time steps using $5$ SD-iterations for each reconstruction in order to find the optimal choice of the step size. As listed in Table~\ref{table:step_sizes}, best results are obtained for the SD iteration combined with the classical steepest descent step size. The reason for the small number of performed iterations is (amongst others) related to the roof sensor approximation for modeling a pyramid sensor and discussed below in more detail. 

\begin{table}
\renewcommand{\arraystretch}{1.2}
\begin{center}
\begin{small}
\begin{tabular}{ l   c  } 
 \hline
 \textbf{step size choice} & \textbf{LE Strehl ratio} \\
 \hline
classical SD &  0.8322\\
minimal gradient & 0.8310 \\
Barzilai-Borwein 1 & 0.8311\\
Barzilai-Borwein 2 & 0.8316\\
Cauchy-Barzilai-Borwein 1 & 0.8317 \\
 \hline
\end{tabular} \caption{SD-reconstruction (Algorithm $2$) results for the non-modulated sensor in the K-band after 500 time steps using different step sizes.}\label{table:step_sizes}
\end{small}
\end{center}
\end{table}


\subsection{Simulated closed loop performance}
Let us analyze the closed loop performance of the developed algorithms and compare their reconstruction quality.
Our reconstruction methods are all based on a simplification of the full pyramid sensor model. As a consequence, after some iteration steps, the reconstructions suffer from an approximation error and depart from the true solution of the full pyramid sensor model although the residuals 
\begin{equation} \label{res_roof}
 \left|\left|s-\dfrac{1}{2} \ \boldsymbol R^{lin}\Phi_i\right|\right|
 \end{equation}
with respect to the simplified model continue to scale down during the iterations. Due to the fact that the full non-linear pyramid sensor model $\boldsymbol P$ consists of two terms $\boldsymbol P=\boldsymbol P^{lin}+\boldsymbol P^{rest}$, where the first term again contains two terms $\boldsymbol P^{lin}=\boldsymbol R^{lin}+\boldsymbol S^{lin}$ (see part I of the paper \cite{HuSha18_1} for more details), a reduction of the roof sensor residual \eqref{res_roof} can imply an error increase of $\boldsymbol S^{lin}\Phi + \boldsymbol P^{rest} \Phi$ in the residual corresponding to the full pyramid sensor model
\begin{equation*}
 \left|\left|s-\dfrac{1}{2}\boldsymbol P\Phi_i\right|\right| = \left|\left|s- \dfrac{1}{2}\boldsymbol R^{lin}\Phi_i -\dfrac{1}{2} \left( \boldsymbol S^{lin} + \boldsymbol P^{rest}\right)\Phi_i\right|\right| .
 \end{equation*}
Besides the approximation error, another error source, the data error, is present in the reconstruction process. It is inevitable to search for an adequate stopping criterion taking into account both the difference between the real pyramid sensor operator $\boldsymbol P$ providing the measurements $s$ and the approximate operator $\boldsymbol R^{lin}$, which builds the foundation of the model-based reconstruction algorithms, as well as data errors. For choosing the regularization parameter in the generally non-linear problem of wavefront reconstruction from pyramid data, we discuss the usage of Morozov's discrepancy principle. Assume that the pyramid sensor provides noisy data $s^\delta$ fulfilling $\left|\left|s-s^\delta\right|\right|< \delta$ for some noise level $\delta > 0$. The iteration is terminated with stopping index $k_*(\delta,s^\delta)$ when for the first time the residual is below $\tau\delta$ for some $\tau>1$, i.e., 
$$\left|\left|s^{\delta} - \dfrac{1}{2}\boldsymbol R^{lin}\Phi_i^\delta\right|\right| > \tau\delta \qquad 0\le i < k_* \qquad \qquad \text{and} \qquad \qquad \left|\left|s^\delta-\dfrac{1}{2}\boldsymbol R^{lin}\Phi_{k_*}^\delta\right|\right| \le \tau\delta.$$
The discrepancy principle combined with a criterion for controlling the approximation error can be transferred to the application of only a few CGNE- or SD-iterations resulting in a very low value for $k_*$ as confirmed by a huge number of numerical simulations performed within this study. In particular, one iteration suffices to provide high reconstruction quality when using a warm restart of the system. That is, in the first time step, the initial guess is chosen as zero, i.e., $\Phi_{0,0}:=0$, and at time step $t>0$ the initial value is set to the reconstructed phase of the previous step, i.e., $\Phi_{t,0}:=\Phi_{t-1}^{rec}$. By employing the reconstruction of the previous step as initial guess $\Phi_0$, we significantly decrease the computational complexity since Algorithm $1$, $2$ and $5$ are scaled down to non-iterative gradient based methods by applying only one corresponding iteration step. The warm restart technique improves the convergence speed of the iterative solvers and additionally slightly increases the quality performance. The suitable number of iterations is also depending on the number of incident photons, since a high photon flux results in reduced data noise and vice versa. \bigskip

In our applications a total number of $K=1$ iterations turned out to be optimal with respect to the reconstruction quality and the computational
complexity of the method. Except for the Landweber type approaches (Algorithm $3$ and $4$), we use more than one iteration, but already $K=5$ Landweber steps combined with an adapted choice of the relaxation parameter and the warm restart technique are enough to obtain satisfying reconstruction quality. 

In case of one CGNE- or SD-iteration the two algorithms coincide when using the classical steepest descent step size \eqref{tau_SD}. Additionally, the step sizes in the SD method discussed in the previous Section do not differ for one SD-iterate except for the classical SD step size and the MG step size. Hence, for the numerical simulations with results provided in Table~\ref{table:numerical_results} and Table~\ref{table:numerical_results4} we used the minimal gradient step size \eqref{tau_MG} in order to have an additional comparison of step size choices as well. Since Algorithm $5$b has a reduced computational complexity compared to Algorithm $5$a, we only consider the modified SD-Kaczmarz algorithm in our numerical tests. As above, one SD-Kaczmarz iteration suffices as well. \bigskip

Numerical tests suggest that for METIS an interpolation to a finer grid than given by the subaperture spacing results in an increased reconstruction quality. In the XAO case the corresponding improvement was less significant. This may be related to the difference in subaperture sizes of both systems ($21$~cm in XAO versus $50$~cm in METIS). \bigskip

Corresponding results having a cold start ($\Phi_0=0$) for every time step $t$ can be found in Table~\ref{table:step_sizes} for Algorithm $2$ utilizing a pyramid sensor without modulation while results using the warm restart technique for all presented algorithms are summarized in Table~\ref{table:numerical_results} for the non-modulated pyramid sensor and in Table~\ref{table:numerical_results4} for the modulated pyramid sensor. Hence, the warm restart technique improves the reconstruction quality of the SD approach from an LE Strehl ratio of $0.8322$ having a cold start to $0.8412$ with the warm restart. \bigskip

In case of zero modulation, best reconstruction quality is obtained for the Landweber-Kaczmarz approach using the two measurement sets alternating and for the gradient based approaches (Algorithms $1$-$2$) calculating two reconstructions and averaging at the end. For the sensor having modulation $4$~$\lambda/D$, surprisingly, the CGNE approach even outperforms the Landweber-Kaczmarz algorithm except for the simulations with $50$ photons per subapertures per frame. However, the differences in the results are very small anyhow. In addition to the K-band results shown in Table~\ref{table:numerical_results}, we provide the long-exposure Strehl ratios in other science bands as defined by the instrument specifications. Table~\ref{table:numerical_results_other_bands} shows the quality in L-, M-, and N-bands obtained with the Landweber-Kaczmarz algorithm in the high flux case (10000 ph/subaperture/frame) for the non-modulated sensor. \bigskip

The simulations for a modulated sensor whose results are presented in Table~\ref{table:numerical_results4} were performed with a frame rate of $500$~Hz. In order to have a direct comparison of the non-modulated and modulated sensor, we additionally run a simulation at a frame rate of $1$~kHz (instead of $500$~Hz) using a pyramid sensor with modulation $4$~$\lambda/D$. For the modulated sensor with the CGNE method we obtain the LE Strehl ratio of 0.8782 in the K-band in the high flux case after 500 time steps and the LE Strehl ratio of 0.8415 for the non-modulated sensor. This result fits well our previous experiences with other model-based reconstruction algorithms according to which the modulated sensor provides a higher quality compared to the non-modulated one. \bigskip

All in all, the developed reconstruction algorithms deliver comparable quality and allow for robust and accurate wavefront reconstruction with low computational costs.
\bigskip \par

\begin{table}
\renewcommand{\arraystretch}{1.2}
\begin{center}
\begin{small}
\begin{tabular}{ l   c c c c c } 
 \hline
 \textbf{photon flux} & \textbf{Algorithm 1} & \textbf{Algorithm 2} & \textbf{Algorithm 3} & \textbf{Algorithm 4} & \textbf{Algorithm 5b} \\
 \hline
50 &  0.8374& 0.8376 & 0.8332 & 0.8371 & 0.8331\\
100 & 0.8407 & 0.8409 & 0.8384 & 0.8415 &  0.8393 \\
1000 & 0.8414 & 0.8413 & 0.8395 &  0.8420 & 0.8412\\
10000 & 0.8415 & 0.8412 & 0.8396 & 0.8419 & 0.8413 \\
 \hline
\end{tabular} \caption{Long-exposure Strehl ratios in the K-band obtained with the presented algorithms after 500 closed loop simulation steps for a pyramid sensor without modulation. Best results are obtained for the CGNE approach (Algorithm $1$), the SD (Algorithm $2$), and Landweber-Kaczmarz iteration (Algorithm $4$).}\label{table:numerical_results}
\end{small}
\end{center}
\end{table}

\begin{table}
\renewcommand{\arraystretch}{1.2}
\begin{center}
\begin{small}
\begin{tabular}{ l   c c c c c } 
 \hline
 \textbf{photon flux} & \textbf{Algorithm 1} & \textbf{Algorithm 2} & \textbf{Algorithm 3} & \textbf{Algorithm 4} & \textbf{Algorithm 5b} \\
 \hline
50 &  0.8432& 0.8434 & 0.8427 & 0.8439 & 0.8340\\
100 & 0.8524 & 0.8520 & 0.8517 & 0.8510 &  0.8454 \\
1000 & 0.8597 & 0.8579 & 0.8590 &  0.8562 & 0.8570\\
10000 & 0.8604 & 0.8581 & 0.8595 & 0.8577 & 0.8580 \\
 \hline
\end{tabular} \caption{Long-exposure Strehl ratios in the K-band obtained with the presented algorithms after 500 closed loop simulation steps for a pyramid sensor with modulation. Here, the CGNE approach (Algorithm~$1$) provides the highest reconstruction quality in most of the cases. }\label{table:numerical_results4}
\end{small}
\end{center}
\end{table}

\begin{table}
\renewcommand{\arraystretch}{1.2}
\begin{center}
\begin{small}
\begin{tabular}{ l   c } 
 \hline
 \textbf{sensing wavelength} & \textbf{LE Strehl}  \\
 \hline
2.2 $\mu$m & 0.8419  \\ 
3.0 $\mu$m & 0.9107  \\
3.7 $\mu$m &  0.9401 \\
4.7 $\mu$m &  0.9623 \\
10.0 $\mu$m &  0.9915  \\
 \hline
\end{tabular} \caption{Long-exposure Strehl ratios in L-, M-, and N-bands obtained for the non-modulated pyramid sensor with the Landweber-Kaczmarz algorithm in the high flux case (10000 ph/subaperture/frame) after 500 closed loop simulation steps.}\label{table:numerical_results_other_bands}
\end{small}
\end{center}
\end{table}


\subsection{Comparison to interaction-matrix-based approaches}

In the literature, there are many variants of interaction-matrix-based approaches: statistical estimators or least-squares methods; zonal or modal \cite{Law_MVM_96} control approaches (i.e., the degrees of freedom are modes or actuators/subapertures). The least-squares approach applied to the pyramid wavefront sensor reaches high correction accuracy without regularization, at least if the number of degrees of freedom is small as demonstrated in \cite{Esposito2010_AO_for_LBT, Esposito_2011_pwfs_onsky,Esposito_ao4elt2_LBT_onsky}. However, for large-scale AO systems the least-squares attempt turned out to be less accurate and the minimum variance estimator allowing statistical regularization is preferred \cite{Bardsley_2008,Ellerbroek02} (see also MAP \cite{Clare_ao4elt2_2011,GaLo10,Law_MVM_96,Louarn_AO4ELT5}, MMSE \cite{Bardsley_2008}). One should note that regularization typically requires optimization (fine tuning) of the regularization parameters. Moreover, the regularized control matrix has to be recomputed each time the seeing conditions or the photon flux change, which is a rather time-consuming task. (As already mentioned, the computational complexity required for setting up the command matrix scales as $O(N^3)$ and the application of this command matrix on the sensor data as $O(N^2)$).

Often, in practice and also in simulations, the non-modulated sensor being operated with an interaction-matrix-based approach, is reported to be unstable, see, e.g, \cite{GaLo10,Louarn_AO4ELT5}. One can for instance apply some tricks, like using a \textquoteleft wrong\textquoteright \  command matrix derived for the modulated sensor, or heavily fine-tune the regularization parameters to filter out the unstable modes in the correct interaction matrix (measured or computed for the sensor with modulation 0), which has to be performed on the fly and is a very time-consuming task.

Recently, there was a result published in \cite{Louarn_AO4ELT5} for the non-modulated sensor running in Octopus with a modal MVM at $1$~kHz frame rate. The achieved quality in the K-band was reported to be 0.62 for the high flux case (10000 photons/subaperture/frame). For comparison, the pyramid sensor with modulation $4$~$\lambda/D$ was reported to provide in the same environment the LE Strehl ratio of 0.80.

As recently reported in \cite{Hippler_2018}, another variant of MVM, the zonal minimum variance reconstructor in the YAO simulation tool \cite{Rigaut_2013_ao4elt3_yao}, which is a zonal regularized approach, achieves LE Strehl of 0.89 in case of the modulated pyramid sensor (with modulation $4$) and high photon flux. \bigskip

Comparing the performances of the described algorithms, we can draw the following conclusions. For the pyramid sensor without modulation our reconstruction algorithms, which use the forward model of the sensor, allow not only to close the loop easily, but also to achieve a stable correction over time with a quality significantly higher compared to the interaction-matrix-based reconstructor utilized in~\cite{Louarn_AO4ELT5}. In case of the pyramid sensor with the optimal amount of modulation, our algorithms achieve a reconstruction quality which is slightly (only 0.012 points of LE Strehl for the simulations at a frame rate of $1$~kHz) below the best (known) result obtained with the zonal MMSE variant of MVM.

\section{Computational complexity} \label{chap:computational_complexity} 
We define the computational complexity of the algorithm as a number of required floating point operations (flops). Let $n$
denote the number of subapertures in  one direction, then $N = n^2$ indicates approximately the number of unknowns to be found.

\subsection{Complexity of Landweber iteration and Landweber-Kaczmarz iteration for pyramid sensors}

We only consider the complexity of the operations that have to be performed online and exclude the pre-calculations needed in the application of the operators $\boldsymbol Q$ and $\boldsymbol Q^*$ from our considerations. The number of floating point operations for every step in the Landweber iteration approach for wavefront reconstruction using pyramid sensors (Algorithm $3$ and Algorithm $4$) is provided in Table~\ref{table:PKI_FLOPs}. The post loop step of the Landweber algorithm consists of finding the average between the two resulting
reconstructions, which requires one summation and one division by a scalar. Altogether, this step is summed up to $2 n^2$ operations. \bigskip

\begin{table}
\renewcommand{\arraystretch}{1.2}
\begin{center}
\begin{small}
\begin{tabular}{ l| l   l  } 
& \textbf{operation} & \textbf{\# of flops} \\
 \hline
loop & $\boldsymbol Q_x \Phi_x$ & $2n^3-n^2$  \\
& $s_x-\boldsymbol Q_x\Phi_x$ & $n^2$ \\
& $\boldsymbol Q_x^*\left(s_x-\boldsymbol Q_x\Phi_x\right)$ & $2n^3-n^2$ \\
&$\beta\boldsymbol Q_x^*\left(s_x-\boldsymbol Q_x\Phi_x\right)$ & $n^2$ \\
&$\Phi_x+\beta\boldsymbol Q_x^*\left(s_x-\boldsymbol Q_x\Phi_x\right)$ & $n^2$ \\
post loop step & $\Phi = \tfrac{1}{2}\left(\Phi_x+\Phi_y\right)$ & $2n^2$ \\
\hline
\end{tabular} \caption{The number of flops to be performed online in the Landweber and Landweber-Kaczmarz method.}\label{table:PKI_FLOPs}
\end{small}
\end{center}
\end{table}

Since we perform the mentioned operations twice (in $x$- and in $y$-direction), for $K$ iterations we obtain the complexity $$C_{4}(n; K)=\left(8n^3+2n^2\right)\cdot K$$ for the Landweber-Kaczmarz approach (Algorithm $4$) and $$C_{3}(n; K)=\left(8n^3+2n^2\right)\cdot K + 2n^2$$ flops for the application of the Landweber iteration (Algorithm $3$) having the additional step of averaging. 


\subsection{Complexity of SD and SD-Kaczmarz algorithm for pyramid sensors}
We again only consider the operations performed online. The complexity of one steepest descent iteration (Algorithm $2$ and Algorithm $5$) is indicated in Table~\ref{table:SD_FLOPs}. In case of the classical steepest descent iteration (Algorithm $2$) a subsequent averaging  $\Phi=1/2(\Phi_x+\Phi_y)$ has to be performed, which costs additionally $2n^2$ flops. Therefore, the number of flops for the steepest descent-Kaczmarz approach applied to pyramid sensors (Algorithm $5$a) is given by 
\begin{align*}
C_{5a}(n; K)&=\left(12n^3+8n^2+2\right)\cdot K ,
\end{align*}
for the modified Algorithm $5$b by 
\begin{align*}
C_{5b}(n; K)&=\left(6n^3+4n^2+1\right)\cdot K ,
\end{align*}
and for the steepest descent approach (Algorithm $2$) by $$C_{2}(n; K)=\left(12n^3+8n^2+2\right)\cdot K + 2n^2,$$ where $K$ indicates the number of steepest descent steps.

\begin{table}
\renewcommand{\arraystretch}{1.2}
\begin{center}
\begin{small}
\begin{tabular}{l | l   l  } 

& \textbf{operation} & \textbf{\# of flops} \\
 \hline
loop & $J'(\Phi_x)$ & $4n^3-n^2$  \\
 &evaluation of $\tau_{SD}$ & $2n^3+3n^2+1$ \\
& $\Phi_x-\tau J'(\Phi_x)$ & $2n^2$ \\
post loop step & $\Phi = \dfrac{\Phi_x+\Phi_y}{2}$ & $2n^2$ \\
\hline
\end{tabular} \caption{The number of flops in the steepest descent and steepest descent-Kaczmarz method for pyramid sensors.}\label{table:SD_FLOPs}
\end{small}
\end{center}
\end{table}


\subsection{Complexity of CGNE for pyramid sensors}

The CGNE method consists of three steps:
\begin{enumerate}
\item a pre-computation and initialization step which have to be done for both $x$- and $y$-direction once,
\item the CG-loop for $K$ iterations performed twice in $x$- and $y$-direction,
\item a post loop step in which we average the two obtained reconstructions.
\end{enumerate}

The number of flops in the CGNE algorithm for pyramid sensors for operations which are not pre-computed offline is indicated in Table~\ref{table:CG_FLOPs}. Summing up the specified operations for both $s_x$ and $s_y$ data, we see that the initialization step consists of $8n^3$ flops, the loop of $\left(8n^3+20n^2+4\right)\cdot K$ and the post loop step of $2n^2$ operations. Hence, the CGNE complexity for pyramid sensors sums up as $$C_1(n;K)=\left(8n^3+20n^2+4\right)\cdot K + 8n^3 + 2n^2.$$
However, since the CG method is known to require the fewest number of iterations, $K$ usually is smaller compared to, e.g., the Landweber iteration.
\begin{table}
\renewcommand{\arraystretch}{1.2}
\begin{center}
\begin{small}
\begin{tabular}{l | l   l  } 
& \textbf{operation} & \textbf{\# of flops} \\
 \hline
 init & computation of $d_{x,0}$ & $2n^3$  \\
& computation of $s_{x,0}$ & $2n^3-n^2$ \\
& initialization of $p_{x,1}$ & $n^2$ \\
loop & computation of $q_x$ & $2n^3-n^2$ \\
& computation of $\alpha$ & $4n^2+1$ \\
& computation of $\Phi_x$ & $2n^2$ \\
& computation of $d_x$ & $2n^2$ \\
& computation of $s_x$ & $2n^3-n^2$ \\
& computation of $\beta$ & $2n^2+1$ \\
& computation of $p_x$ & $2n^2$ \\
post loop step & $\Phi = \tfrac{1}{2}\left(\Phi_x+\Phi_y\right)$ & $2n^2$ \\
\hline
\end{tabular} \caption{The number of flops in the CGNE algorithm for pyramid sensors considering operations which are not precomputed offline.}\label{table:CG_FLOPs}
\end{small}
\end{center}
\end{table}


\subsection{Comparison to MVM}

In our notations, the complexity of standard MVM methods scales as $\mathcal{O}(N^2)=\mathcal{O}(n^4)$. For these studies, we calculate the reconstructions (mirror actuator commands) at the corners of the subapertures, and thus need to
consider approximately $n'=n+1$ phase values. For, e.g., $n=200$ subapertures, the complexity of MVM is roughly given by
$$ C_{MVM} (200) = 201^4 \approx 16 \cdot 10^8 = 1600 \cdot 10^6 . $$
The complexities of the developed methods are estimated in Table~\ref{table:complexity}. Here, we assume that the mirror actuators are equidistantly spaced on a squared shape although this is not employed in practice since not all actuators are actively controlled. However, for a theoretical comparison of complexities such assumptions are still relevant. Note that in principle $C_1(n;1)=C_2(n;1)$ because the CGNE algorithm can already be terminated after the calculation of $\Phi_1$. However, we consider one full CGNE step in the Table. As a remark, we mention that in our comparison we have omitted the additional computational effort required in the presented model-based algorithms for computation of deformable mirror commands from the reconstructed wavefront shape. This step can be represented as a bilinear interpolation from the $n\times n$ grid of subapertures to the $(n+1)\times (n+1)$ grid of DM actuators, which requires $4(n+1)^2$ flops to be performed. Also, we would like to mention that the time-saving features of MVM approaches like parallelizability and pipelineability are valid in our algorithms as well.

\begin{table}
\renewcommand{\arraystretch}{1.2}
\begin{center}
\begin{small}
\begin{tabular}{  l    c | c  c   c  |c } 
 \textbf{approach} & \textbf{complexity} & \textbf{flops} & \textbf{METIS (n=74)} &  \textbf{XAO (n=200)} & \textbf{XAO in $\%$}\\
 \hline
MVM & $\mathcal {O}\left(N^2\right)$ & $C_{MVM}(n)$   & $30 \cdot 10^6$ & $1600 \cdot 10^6$ & 100 \% \\
CGNE &$\mathcal {O}\left(N^{3/2}\right)$ & $C_{1}(n;1)$   & $6,9 \cdot 10^6$ & $130,9 \cdot 10^6$ & 8  \%\\
SD & $\mathcal {O}\left(N^{3/2}\right)$ &$C_{2}(n;1)$   & $5,1 \cdot 10^6$ & $97,9 \cdot 10^6$ & 6  \%\\
SD-Kaczmarz& $\mathcal {O}\left(N^{3/2}\right)$ &$C_{5a}(n;1)$   & $5,1 \cdot 10^6$ & $97,8 \cdot 10^6$ & 6  \% \\
modified SD-Kaczmarz& $\mathcal {O}\left(N^{3/2}\right)$ &$C_{5b}(n;1)$   & $2,6 \cdot 10^6$ & $48,9 \cdot 10^6$ & 3  \%\\
Landweber iteration& $\mathcal {O}\left(N^{3/2}\right)$ &$C_{3}(n;5)$   & $16,9 \cdot 10^6$ & $325,3 \cdot 10^6$ & 20  \% \\
Landweber-Kaczmarz it.& $\mathcal {O}\left(N^{3/2}\right)$ &$C_{4}(n;5)$   & $16,3 \cdot 10^6$ & $325,2 \cdot 10^6$ & 20  \% \\
 \hline
\end{tabular} \caption{The computational complexities of the algorithms analyzed in this paper compared to the implementation of an MVM method. Estimates of the number of flops necessary for the METIS instrument having a $74\times 74$ pyramid sensor and for an XAO system with a $200\times 200$ pyramid sensor. The last column demonstrates the computational effort of the new algorithms as percentage of the MVM effort for the XAO system.}\label{table:complexity}
\end{small}
\end{center}
\end{table}

The developed algorithms allow to significantly reduce the numerical effort of the wavefront reconstruction step in an AO loop compared to the computational load related to the solvers based on matrix-vector multiplication. This is illustrated especially well for the XAO system having a huge number of active actuators. The computational effort of MVM-based wavefront estimators is extremely demanding in this case. In contrast, the usage of analytically developed wavefront reconstructors allows one to heavily reduce the numerical effort of the AO loop. For instance as shown in Table~\ref{table:complexity}, the modified steepest descent algorithm reduces the computational load of the wavefront reconstruction step in the XAO loop to approximately $3 \%$ of the MVM effort while still providing high reconstruction quality.

\section*{Conclusion and Outlook}

In this paper we have studied the application of well-known iterative algorithms for solving the inverse problem of wavefront reconstruction from pyramid wavefront sensor data in the field of astronomical Adaptive Optics. From the performed end-to-end numerical simulations we can draw the conclusion that all studied algorithms deliver very similar reconstruction quality. However, it is preferable to apply the Kaczmarz versions of the algorithms or the CGNE approach, since they provide a slightly better reconstruction quality, though, the difference in the achieved quality between all the methods is minor. The best quality is obtained with the CGNE approach (Algorithm $1$) and with the Landweber-Kaczmarz iteration (Algorithm $4$), which at the same time is part of the slowest among the algorithms under comparison. If one decides to go for speed at the cost of a negligible quality loss, one should choose the modified steepest descent-Kaczmarz version combined with the classical step size choice (Algorithm 5b). \bigskip

As shown by numerical results presented in this study, the proposed algorithms, which are partially iterative methods, allow to keep the numerical effort of the wavefront reconstruction step in an AO loop low compared to the computational load of solvers based on matrix-vector-multiplication. This has an especially big impact for the considered XAO system having a huge number of active actuators. For instance, the modified steepest descent algorithm reduces the computational load of the wavefront reconstruction step in the XAO loop to approximately $3 \%$ of the MVM effort while still providing high reconstruction quality.

Even when using simplifications of the pyramid sensor model, all proposed algorithms provide stable high-quality reconstruction and (almost) reach the quality of interaction-matrix-based approaches in which the full pyramid model is assumed. Especially for the non-modulated sensor, the linear iterative algorithm give stable and very accurate wavefront reconstructions. If we compare the methods presented in this paper with the P-CuReD, all of them are outmatched by the P-CuReD with respect to both speed and quality. Nevertheless, in the proposed iterative methods, there is a possibility to investigate the full pyramid sensor model for future developments. Remarkable quality improvements are hoped for those adaptions. A big advantage of the iterative methods over the P-CuReD is that the full pyramid sensor model or real life features such as telescope spiders or the low wind effect can be incorporated. For the P-CuReD it may even be impossible to adapt the algorithm to a more sophisticated pyramid sensor model. \bigskip



Finally, we would like to mention that investigations of the behavior of iterative algorithms in the presence of the so called \grqq spiders\grqq \ (support structures of the secondary mirror segmenting the telescope pupil into disjoint parts), the application of non-linear iterative algorithms for wavefront recnstruction as well as further quality evaluations to meet specifications of the METIS instrument are part of our further research~\cite{Hut18,HuShaOb18,ObRafShaHu18_proc}. 

\section*{Acknowledgements}

This work has been partly supported by the Austrian Federal Ministry of Science and Research (HRSM) and the Austrian Science Fund (F68-N36, project 5).

\bibliographystyle{plain}
\bibliography{arXiv_pyramid_iterativeII}

\end{document}